\documentclass[aps,pra,showpacs,notitlepage,onecolumn,superscriptaddress,nofootinbib]{revtex4-1}
\usepackage[utf8]{inputenc}
\usepackage[tmargin=1in, bmargin=1.25in, lmargin=1.5in, rmargin=1.5in]{geometry}
\usepackage{amsmath, amssymb, amsthm}
\usepackage{graphicx}
\usepackage{xcolor}
\usepackage{enumitem}
\usepackage{datetime}
\usepackage{hyperref}
\usepackage{titlesec}
\usepackage{import}
\usepackage{mathtools}
\usepackage{thmtools,thm-restate}
\usepackage{comment}


\definecolor{crimson}{RGB}{186,0,44}
\definecolor{moss}{RGB}{0, 186, 111}

\hypersetup{
    colorlinks,
    linkcolor={crimson},
    citecolor={crimson},
    urlcolor={crimson}
}

\usepackage{qcircuit}

\theoremstyle{definition}
\newtheorem{definition}{Definition}[section]
\newtheorem{lemma}{Lemma}[section]
\newtheorem{theorem}{Theorem}[section]

\newtheorem*{theorem*}{Theorem}
\newtheorem*{corollary*}{Corollary}
\newtheorem{remark}{Remark}[section]

\usepackage{geometry}
\geometry{
  left=25mm,
  right=25mm,
  top=20mm,
}

\bibliographystyle{unsrt}

\begin{document}

\title{Non-Markov quantum belief propagation}

\author{Jack Ceroni}
\email{jc@extropic.ai}
\affiliation{Extropic Corp., San Francisco, California, 94111, USA}
\affiliation{Department of Mathematics, University of Toronto, Toronto, Ontario M5S1A1, Canada}

\author{Ian MacCormack}
\affiliation{Extropic Corp., San Francisco, California, 94111, USA}

\author{Guillaume Verdon}
\affiliation{Extropic Corp., San Francisco, California, 94111, USA}

\date{\today}

\begin{abstract}
\noindent We provide a rigorous proof of the approximate convergence of sliding-window quantum belief-propagation as outlined heuristically in the work of Bilgin and Poulin (Ref.~\cite{bilgin2010coarse}), in the absence of the quantum Markov property. In particular, we confirm the hypothesis outlined in this work that the approximation error of each step in the belief-propagation algorithm decreases exponentially with the sliding-window size, under the assumption that the underlying state on which belief-propagation is being performed possesses a so-called thermal boundedness property: a relaxation of the Markov property required for exact convergence.
\end{abstract}

\maketitle
\section{Introduction}
\noindent The problem of simulating large-scale quantum systems is one of the pinnacles of modern computational physics, with applications to chemistry, material science, electronics engineering, drug design, and more \cite{georgescu2014quantum}. Although most problems under the umbrella of quantum simulation are computationally difficult, there have been numerous attempts over the past several decades to find restricted problems that are provably efficiently solvable, as well as heuristics which have good empirical performance, but lack guarantees about robustness. These methods include tensor network simulations \cite{2006quant.ph..8197P, 2019NatRP...1..538O}, Hartree-Fock and density functional methods in quantum chemistry \cite{casares2023grad, sun2018pyscf, hohenberg1964inhomogeneous}, and machine-learning based approaches \cite{2016arXiv160102036A,2022PhRvA.106a2409K}. In addition to the proliferation of this classical methods, one of the main proposed use-cases of quantum computers is simulation and characterization of large-scale quantum systems, due to the fact that a quantum device can store and manipulate quantum states natively. This has led to proposals for performing quantum simulation with fault-tolerant quantum computers running algorithms which likely yield speedups over classical methods (when executed on fault-tolerant quantum computers) \cite{1995quant.ph.11026K}, as well as variational methods meant to run on nearer-term devices \cite{2020arXiv201209265C}, but which lack most of the complexity-theoretic guarantees on efficient scaling which are true of fault-tolerant algorithms.

Among the proposals to simulate quantum systems with quantum devices, the two problems which have received the most attention are time-evolution and thermal state calculations. Time-evolution refers to computing the dynamics of a quantum system as it evolves under the influence of a particular Hamiltonian, while thermal state calculations refer to preparation and measurement of thermal states: the state that a quantum system with a fixed Hamiltonian achieves when it reaches thermal equilibrium at a particular temperature. The problem addressed in this paper is the latter: measurement of thermal states. In particular, we are interested in computing physically observable quantities associated with systems which are in their thermal state. Oftentimes, such quantities can be calculated with only \textit{local knowledge} of each constituent ``component" of the thermal state. In particular, for a quantum system distributed across many qubits (or other degrees of freedom), the local state of the system at a given sub-collection of qubits is referred to as the \textit{reduced density operator}. For a given thermal state, it is possible to compute its corresponding reduced density operators by preparing the whole state (an operator which scales exponentially in dimension as we increase the number of qubits), and then take partial traces onto subsystems of interest. This naive approach, however, exhibits poor scalability due to the exponential growth of the operator's size. In this paper, we explore a method first proposed by Liefer and Poulin in Ref.~\cite{leifer2008quantum} know as \textit{quantum belief-propagation via quantum message-passing}. In analogy to classical belief propagation \cite{10.5555/779343.779352}, quantum belief propagation, under certain (restrictive) conditions, is able to prepare reduced density operators of a thermal state \emph{without} first preparing the global thermal state and performing a partial trace. Moreover, the class of density operators for which this method works exactly is \emph{fundamentally distinct} from those which can be approximated via tensor network methods (i.e. long range entanglement as realized by bond dimensions is not the limiting factor) \cite{2006PhRvB..73i4423V}. However, like tensor network methods, the restrictions on the class of states for which QBP is guaranteed to give the correct answer are often difficult to fulfill. Of the required constraints, the most restrictive is the \emph{Markov property}. Ref.~\cite{leifer2008quantum} makes note of this fact, and suggests a technique for circumventing this constraint, via a protocol called \emph{sliding-window belief propagation}, where the underlying thermal state must only satisfy a weaker version of the Markov property, which we refer to as the $\ell$-Markov property, at the cost of increased resource requirements. Extending upon this work, in Ref.~\cite{bilgin2010coarse}, Bilgin and Poulin treat sliding-window belief-propagation as a heuristic, parameterized by the size of the so-called ``sliding window" used. In this prior work, it is demonstrated numerically, and via some physical intuition, that certain quantum graphical models will approximately satisfy a constraint implied by the $\ell$-Markov property which is sufficient to guarantee approximate convergence of sliding-window belief propagation for a size-$\ell$ window.

Extending upon the promising work of Ref.~\cite{bilgin2010coarse}, the goal of this paper is to take a step towards putting these results on a firmer theoretical footing. To this end, we derive rigorous error bounds for a single step of the sliding-window belief-propagation algorithm, under the assumption that the underlying state satisfies a property alluded to in Ref.~\cite{bilgin2010coarse} which we term \emph{thermal boundedness}. Our hope is that this analysis is a first step towards better understanding of the feasibility of using the sliding-window belief-propagation algorithm for the practical application of large-scale quantum simulation.

This manuscript is structured as follows. In Sec.~\ref{sec:background}, we briefly introduce the background information related to quantum belief-propagation via message-passing, and exposit its generalized version, sliding-window belief-propagation, as well as the property of thermal boundedness, formalized from ideas in Ref.~\cite{bilgin2010coarse}. In Sec.~\ref{sec:main_results}, we outline the main contribution of this work: the theorem which characterizes the error introduced in each step of sliding-window belief propagation. We conclude the paper by summarizing the results and their implications, and suggest directions of future inquiry in Sec.~\ref{sec:conclusion}. In Appx.~\ref{appx:proofs}, we provide a proof of the main theorem, Thm.~\ref{thm:error_single_step}, as well as supporting lemmas. Appx.~\ref{appx:markov} presents a supplementary information related to quantum belief-propagtion and sliding-window algorithms, as well as further discussion about computation of the circle product (Def.~\ref{def:circle_prod}). Finally, Appx.~\ref{appx:misc} is a collection of miscellaneous results which are utilized throughout the paper.

\section{Background}
\label{sec:background}

\noindent We begin with a discussion of context and briefly provide relevant background information related to quantum belief-propagation via message-passing, and its generalization, sliding-window belief-propagation.

\subsection{Quantum belief-propagation via message-passing}

\noindent The procedure of quantum belief-propagation via message-passing allows for the computation of reduced density matrices of state $\rho_\mathcal{V}$, in some \textit{quantum graphical model} $(G, \rho_V)$ satisfying a set of assumptions, in a computationally efficient manner. As an aside, we note that all of the definitions below have direct equivalents in terms of quantum factor graphical models, a formalism that may sometimes be more convenient \cite{leifer2008quantum}.
\begin{definition}[Quantum Graphical Model]
  Let $G = (\mathcal{V}, \mathcal{E})$ be a graph. Let $\mathcal{H} = \bigotimes_{v \in \mathcal{V}} \mathcal{H}^{v}$ be a quantum system supported on the nodes of $G$. Let us have a density operator $\rho_{\mathcal{V}} \in \bigotimes_{v \in V} \mathcal{H}^{v}$.
  Then the pair $(G, \rho_{\mathcal{V}})$ is said to be a quantum graphical model.
\end{definition}
\noindent In the case that $G$ is a tree graph (there exists a unique path between any two nodes of $G$), and $\rho_\mathcal{V}$ is a \textit{bifactor state of $G$} with the \textit{Markov property}, then the quantum message-passing algorithm converges.

Let us now define a useful operation which will be utilized throughout the paper:

\begin{definition}[Circle Product]
\label{def:circle_prod}
  Given positive Hermitian operators $A$ and $B$, we define the circle product between $A$ and $B$ as
  \begin{equation}
    A \odot B = \exp \left( \log(A) + \log(B) \right)
  \end{equation}
  where $\exp$ is the matrix exponential, and $\log$ is the matrix logarithm.
\end{definition}
\noindent The intuition for this definition is that it combines the Hamiltonians of two thermal states. In particular, $e^{A} \odot e^{B} = e^{A + B}$. Note that when $[A, B] = 0$, the circle product reduces to the standard matrix product: $e^{A} \odot e^{B} = e^{A} e^{B}$.
\begin{definition}[Bifactor State]
\label{def:bifactor}
  A state $\rho_\mathcal{V}$ defined on a graph $G = (\mathcal{V}, \mathcal{E})$, with $\rho_\mathcal{V} \in \bigotimes_{v \in \mathcal{V}} \mathcal{H}^{v}$ is said to be a bifactor state of $G$ if we can write
  \begin{equation}
   \rho_\mathcal{V} = \frac{1}{\mathcal{Z}} \left[ \bigodot_{v \in \mathcal{V}} \mu_v \right] \odot \left[ \bigodot_{e \in \mathcal{E}} \nu_{e} \right] 
  \end{equation}
  where $\mu_v$ is supported on $\mathcal{H}^{v}$, and $\nu_e$ for $e = (e_1, e_2)$ is supported on $\mathcal{H}^{e_1} \otimes \mathcal{H}^{e_2}$.
\end{definition}
\begin{definition}[Conditional Mutual Information]
  Given a density operator $\rho_{ABC} \in \mathcal{H}^{A} \otimes \mathcal{H}^{B} \otimes \mathcal{H}^{C}$, we say the conditional mutual information between systems $A$ and $C$, given $B$, is defined to be
  \begin{equation}
    S(A : C | B) = S(\rho_{AB}) + S(\rho_{BC}) - S(\rho_{ABC}) - S(\rho_B)
  \end{equation}
  where $S$ is the von Neumann entropy.
\end{definition}
\begin{definition}[Markov Property]
\label{def:q_markov}
  A quantum graphical model $(G, \rho_\mathcal{V})$ is said to have the Markov property if for any subset $\mathcal{U} \subset \mathcal{V}$ with $G = (\mathcal{V}, \mathcal{E})$, we have
  \begin{equation}
    S( \mathcal{U} : \mathcal{V} - (n(\mathcal{U}) \cup \mathcal{U}) | n(\mathcal{U}) - \mathcal{U}) = 0,
  \end{equation}
  where $n(\mathcal{U})$ denotes all nodes neighbouring a node of $\mathcal{U}$, in $G$. The set $(n(\mathcal{U}) \cup \mathcal{U})$ is sometimes referred to as the \textit{Markov blanket} of $\mathcal{U}$. So, intuitively, the Markov property is satisfied for a subset $\mathcal{U}$ if $\mathcal{U}$ is independent of $V-(n(\mathcal{U}) \cup \mathcal{U})$, conditioned on $(n(\mathcal{U}) \cup \mathcal{U})$, the Markov blanket of $\mathcal{U}$.
\end{definition}

\noindent As is implied by the name, quantum belief-propagation via message-passage effectuates the calculation of reduced density matrices by passing \emph{messages} along the edges of a quantum graphical model.
\begin{definition}[Messages and beliefs]
   Given a graphical model $(G, \rho_\mathcal{V})$ where $\rho_{\mathcal{V}}$ is a bifactor state as in Def.~\ref{def:bifactor}, the message passed from node $u$ to node $v$ in $\mathcal{V}$, at discrete time $t \geq 1$ is defined, inductively, to be the operator
    \begin{equation}
    \label{eq:message}
      m_{u \to v}(t) = \frac{1}{\mathcal{Z}} \text{Tr}_{u} \left[ \mu_u \odot \nu_{(u, v)} \odot \left( \bigodot_{v' \in n(u) - v} m_{v' \to u}(t - 1) \right) \right],
    \end{equation} 
    and $m_{u \to v}(0) =  \frac{1}{\mathcal{Z}} \text{Tr}_{u} \left[ \mu_u \odot \nu_{(u, v)} \right]$. In addition, the \textit{belief} (an estimate of the reduced density operator) at node $v$, at time $t$, is defined in terms of the messages as
    \begin{equation}
        b_v(t) = \frac{1}{\mathcal{Z}} \left[ \mu_{v} \odot \left( \bigodot_{u \in n(v)} m_{u \to v}(t) \right) \right].
    \end{equation}
\end{definition}
\noindent We can now present the main theorem characterizing message-passing.
\begin{theorem}[Quantum belief-propagation]
\label{thm:qmp}
Given a quantum graphical model $(G, \rho_\mathcal{V})$ and a node $v \in \mathcal{V}$, there exists some $T$ (in particular, when $T = \max_{w \in \mathcal{V}} d(v, w)$, the largest distance between $v$ and another node of the graph) such that $b_v(T) = \mathrm{Tr}_{\mathcal{V} - v}[\rho_\mathcal{V}]$: the reduced density matrix of $\rho_\mathcal{V}$ on node $v$ if

\begin{enumerate}
    \item $G$ is a tree graph (there is a unique path between each pair of nodes),
    \item $\rho_\mathcal{V}$ is a bifactor state (Definition~\ref{def:bifactor}),
    \item $(G, \rho_\mathcal{V})$ has the quantum Markov property (Definition~\ref{def:q_markov}).
\end{enumerate}
\end{theorem}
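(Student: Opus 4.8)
The plan is to root the tree at the target node $v$ and prove exactness by an inductive ``leaves-to-root'' argument, reducing the theorem to a single interchange identity between partial traces and circle products that is in turn justified by the Markov property. First I would fix the root $v$ and, for each $u \in \mathcal{V}$, write $p(u)$ for its parent, $C(u)$ for its children, and $D(u) \subseteq \mathcal{V}$ for the subtree of descendants of $u$ (including $u$ itself). Because $\rho_\mathcal{V}$ is a bifactor state, $\log \rho_\mathcal{V} = \sum_{w}\log\mu_w + \sum_e \log\nu_e - \log\mathcal{Z}$, so $\rho_\mathcal{V}$ is proportional to the exponential of a Hamiltonian carrying one term per node and one per edge; the tree structure guarantees that each edge term $\nu_{(u,p(u))}$ is the \emph{unique} coupling between the subtree $D(u)$ and its complement.

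Next I would establish the central claim by induction on subtree depth: for every non-root $u$ and every $t$ at least the depth of $D(u)$,
\begin{equation}
  m_{u \to p(u)}(t) \;\propto\; \mathrm{Tr}_{D(u)}\!\left[\nu_{(u,p(u))} \odot \bigodot_{w \in D(u)} \mu_w \odot \bigodot_{e \subseteq D(u)} \nu_e\right],
\end{equation}
i.e. the message from $u$ carries exactly the partial trace over the subtree hanging below it (the factors internal to $D(u)$). The base case is a leaf, where the definition of $m_{u\to p(u)}(0)$ matches this directly. For the inductive step I would substitute the message recursion, apply the inductive hypothesis to each child $c \in C(u)$, and then argue that tracing out $D(u)$ may be performed by first tracing out the children subtrees (yielding the child messages) and only afterward tracing out $u$ itself.

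I would then assemble the belief. Since the neighbours of $v$ index disjoint subtrees whose union is $\mathcal{V} - v$, combining the messages at $T = \max_w d(v,w)$ gives
\begin{equation}
  b_v(T) \;\propto\; \mu_v \odot \bigodot_{u \in n(v)} \mathrm{Tr}_{D(u)}[\,\cdots\,],
\end{equation}
and it remains to identify this with $\mathrm{Tr}_{\mathcal{V}-v}[\rho_\mathcal{V}]$. This needs two things: that the circle product of per-subtree traces equals the single trace over $\mathcal{V}-v$ of the circle product of all factors, and that $\mu_v$ may be pulled outside the trace.

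Everything above rests on one interchange lemma: if $Y$ is supported away from a region $A$ and the conditional mutual information isolating $A$ from the rest through its boundary vanishes, then $\mathrm{Tr}_A[X \odot Y] = (\mathrm{Tr}_A X) \odot Y$. This is the heart of the matter and the step I expect to be hardest, because $\odot$ is defined through $\log$ and $\exp$, which do \emph{not} commute with partial trace when the operands fail to commute. The Markov hypothesis is precisely what rescues the interchange: by the Hayden--Jozsa--Petz--Winter structure theorem, vanishing conditional mutual information across a boundary forces the boundary Hilbert space to split as $\bigoplus_j \mathcal{H}^L_j \otimes \mathcal{H}^R_j$ with the state block-diagonal and tensor-factorized within each block, so that $\odot$ acts blockwise while the partial trace acts inside the tensor factors, and the two commute. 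Carrying this structural decomposition through the $\log/\exp$ definition of $\odot$ — and checking that it survives the repeated, nested applications demanded by the induction over the whole tree — is where the real work lies; the tree and bifactor hypotheses serve mainly to guarantee that every trace we perform is separated from the rest of the graph by exactly one such Markov boundary.
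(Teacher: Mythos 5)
Your proposal is correct and follows essentially the same route the paper takes (following Leifer and Poulin): a leaves-to-root propagation of partial traces on the rooted tree, with the crucial trace/circle-product interchange justified by the Hayden--Jozsa--Petz--Winter decomposition theorem (Thm.~\ref{thm:decomp}). The nested-application concern you correctly flag as the remaining work is exactly what the paper handles via Lem.~\ref{lem:mar_tr}, which shows that tracing out a leaf node preserves the Markov property so the interchange can be applied recursively.
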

\noindent For further discussion of this result, see Appx.~\ref{appx:markov}. At a high-level, the belief-propagation algorithm relies on the fact that given a bifactor state $\rho_{\mathcal{V}}$, and assuming that this state has the Markov property, the superoperation of taking the partial trace over a particular site of $\rho_{\mathcal{V}}$ \emph{commutes with the circle product} (this is a consequence of Thm.~\ref{thm:decomp}). Therefore, if $G$ has tree-structure rooted at $v \in \mathcal{V}$, one is able to propagate partial traces through the network down to the leaf nodes, recursively shrinking the support of $\rho_{\mathcal{V}}$ until all that remains is the reduced state at site $v$. During this procedure, because of the local circle product structure, the only states which have to be prepared at a given step are those which are being traced over (which, as was stated, are localized to the leaf nodes). It is the breakdown of this commutativity of the circle product with partial traces which leads to difficulties when attempting to run belief-propagation on non-Markov states.

\subsection{Sliding-window belief propagation and thermal boundedness}

\noindent Thm.~\ref{thm:qmp} is a powerful result, seeing as it guarantees exact convergence to a desired reduced density matrix, in a fixed amount of time. However, as was mentioned previously, its restrictions can pose difficulties, the most notable being the Markov condition, which is often difficult to satisfy.

To further place this property's necessity in context, let us consider the pedagogical example where we are provided with a quantum bifactor network $(\rho_\mathcal{V}, G)$ such that $G$ is an $N$-vertex chain. We write $\rho_\mathcal{V}$ as a circle product of edge terms, that is
\begin{equation}
    \rho_{\mathcal{V}} = \frac{1}{\mathcal{Z}} \left[ e^{-\beta h_{N, N - 1}} \odot \cdots \odot e^{-\beta h_{3, 2}} \odot e^{-\beta h_{2, 1}} \right]
\end{equation}
where $h_{k, k - 1}$ is supported on $\mathcal{H}^{v_k} \otimes \mathcal{H}^{v_{k - 1}}$. Suppose we are interested in computing the reduced density matrix on node $v_N$, $\rho_{v_N}$. This is equivalent to calculation of the partial trace over all other nodes, which we can perform sequentially, $\rho_{v_N} = (\text{Tr}_{v_{N - 1}} \circ \cdots \circ \text{Tr}_{v_{1}})(\rho_\mathcal{V})$. If $\rho_\mathcal{V}$ were a Markov model, it would be possible to propagate each of the successive partial traces through the sequence of circle products due to the commutativity of circle products with partial traces, mentioned above. For example, taking the partial trace on the first site can be done via the identity
\begin{equation}
    \label{eq:markov_chain}
    \text{Tr}_1 \left[ e^{-\beta h_{N, N - 1}} \odot \cdots \odot e^{-\beta h_{3, 2}} \odot e^{-\beta h_{2, 1}} \right] = e^{-\beta h_{N, N - 1}} \odot \cdots \odot e^{-\beta h_{3, 2}} \odot \text{Tr}_1 \left[ e^{-\beta h_{2, 1}} \right].
\end{equation}
From Lem.~\ref{lem:mar_tr}, we know that taking the partial trace preserves the Markov property, so we can repeat this procedure for all other $\text{Tr}_{v_k}$ operations. This implies that the desired reduced density operator can be computed via a sequence of circle products and partial traces, where the largest thermal state ever being prepared during this protocol is of size $\max_k \dim(\mathcal{H}^{v_k} \otimes \mathcal{H}^{v_{k - 1}})$. In the case, however, that $\rho_{\mathcal{V}}$ \textit{does not} have the Markov property, Eq.~\ref{eq:markov_chain} does not hold, and the partial trace operations do not commute with the circle products.

Let us consider a relaxation of the Markov property.
\begin{definition}[$\ell$-Markov property]
   A quantum graphical model $(G, \rho_\mathcal{V})$ is said to have the $\ell$-Markov property if for any subset $\mathcal{U} \subset \mathcal{V}$ with $G = (\mathcal{V}, \mathcal{E})$, we have
  \begin{equation}
    S( \mathcal{U} : \mathcal{V} - (n_{\ell}(\mathcal{U}) \cup \mathcal{U}) | n_{\ell}(\mathcal{U}) - \mathcal{U}) = 0,
  \end{equation}
  where $n_{\ell}(\mathcal{U})$ denotes all nodes $v \in \mathcal{V}$ such that there exists some $u \in \mathcal{U}$ with $d(v, u) \leq \ell$. Note that in the case that $\ell = 1$, the above definition reduces to the original Markov property.
\end{definition}
\noindent Returning to the simple chain example, suppose $\rho_\mathcal{V}$ were to possess this relaxed property for some $\ell$. Then, we would have
\begin{equation}
    \label{eq:sw}
    \text{Tr}_1 \left[ e^{-\beta h_{N, N - 1}} \odot \cdots \odot e^{-\beta h_{2, 1}} \right] = e^{-\beta h_{N, N - 1}} \odot \cdots \odot e^{-\beta h_{\ell + 2, \ell + 1}} \odot \text{Tr}_1 \left[ e^{-\beta h_{\ell + 1, \ell}} \odot \cdots \odot e^{-\beta h_{2, 1}} \right].
\end{equation}
If we continue, taking the $k$-th partial trace before the $(\ell + k)$-th term of the circle product, then message-passing can carry forward in a similar fashion as the original protocol (as this propagation of partial traces is effectively what is happening during the process of message-passing), at the cost of having to prepare larger ``messages" (in this case, of size $\max_{k} \dim(\mathcal{H}^{v_k} \otimes \cdots \otimes \mathcal{H}^{v_{k + \ell}})$). 

In Ref.~\cite{bilgin2010coarse}, Bilgin and Poulin, inspired by this relaxation, demonstrate empirically that the partial trace and circle product approximately commute to a much higher degree when the partial trace is being taken over a site of a graphical model which lies far from the terms directly involved in the circle product (i.e. as $\ell$ becomes larger). Moreover, they observe this property without any consideration of the Markov property. They refer to this new method of belief-propagation as \emph{sliding-window belief propagation}. They justify this fact by reasoning and numerically demonstrating that the influence of the so-called \emph{thermal potential} induced on a given thermal state by the terms being traced over, decays rapidly as distance grows between the site being traced, and the site to which the partial trace is being propagated (i.e. in a chain, this would be the site at distance $\ell$ away from the site being traced over).

\begin{definition}[Thermal Potential]
\label{def:thermal_potential}
Let $G = (\mathcal{V}, \mathcal{E})$, let $H_ = \sum_{e \in \mathcal{E}} h_e$ be a local Hamiltonian supported on $\bigotimes_{v \in \mathcal{V}} \mathcal{H}^{v}$. Define the thermal potential of $H$ effectuated by $\mathcal{V}' \subset \mathcal{V}$ at temperature $1/\beta$ as
\begin{equation}
    V_{\text{thermal}}(H; \mathcal{V}', \beta) = -\frac{1}{\beta} \log \left( \text{Tr}_{\mathcal{V}'}\left[ \frac{1}{\mathcal{Z}} e^{-\beta H} \right] \right) - H
\end{equation}
where $\text{Tr}_{\mathcal{V}'}$ is the partial trace over sites contained in $\mathcal{V}'$.
\end{definition}
\noindent The intuition for this definition lies in the fact that it characterizes the change in the effective Hamiltonian of the thermal state before and after tracing out the nodes of $\mathcal{V}'$. In particular,
\begin{equation}
    \text{Tr}_{\mathcal{V}'}\left[ \frac{1}{\mathcal{Z}} e^{-\beta H} \right] = e^{-\beta \left[ H + V_{\text{thermal}}(H; \mathcal{V}', \beta) \right]}
\end{equation}
Note that, although the argument of the exponent on the right-hand side above contains the full Hamiltonian $H$, it is only nontrivially defined on $\mathcal{V}-\mathcal{V}'$, as one would expect as the result of a partial trace over $\mathcal{V}'$. Due to the fact that we expect local perturbations (i.e. taking the partial trace at a single site) to have local influence on the resulting thermal state in many systems of interest, it is sensible to conclude that $V_{\text{thermal}}(H, \mathcal{V}')$ will be \emph{short-ranged}. This is to say that non-trivial terms of this potential, which have support lying farther and farther away from $\mathcal{V}'$, become small rapidly. More formally, we say that the sequence of magnitudes of the \emph{cumulants away from $\mathcal{V}'$} of $V$ becomes small quickly.
\begin{definition}[Cumulants of a local operator]
\label{def:cum}
Given an operator $O$ supported on graph $G$, the cumulants of $O$ away from $\mathcal{V}'$ are defined inductively as
\begin{equation}
    O^{(j)} = \text{Tr}_{D_j(\mathcal{V}')} \left[ O - \displaystyle\sum_{k = 0}^{j - 1} O^{(k)} \right] \ \ \ \text{with} \ \ \ O^{(0)} = 0,
\end{equation}
where $D_j(\mathcal{V}') = \{v \ | \ d(v, \mathcal{V}') > j\}$. The sequence of cumulants is then $O^{(1)}, O^{(2)}, \dots$.
\end{definition}
\noindent This sequence of operators can be thought of as individual contributions to the total action of $O$ whose ``influence" lying closest to a given region occurs at distance $j$. In particular, note that $O = \sum_{j} O^{(j)}$, where
$O^{(1)}$ represents the contribution to $O$ supported at a distance of at most $1$ from $\mathcal{V}'$, $O^{(2)}$ represents the contribution to $O$ supported at a distance of at most $2$, \emph{that is not attributable to the contribution at distance $1$}, $O^{(3)}$ is the contribution at distance at most $3$ which is not attributable to the distance $1$ or $2$ contributions, and so on.

Ref.~\cite{bilgin2010coarse} demonstrates numerically that in the case of a transverse field, one-dimensional Ising model chain, the decay of cumulants is exponentially bounded in magnitude, and moreover, goes to zero much faster than correlation coefficients between sites separated by increasing integer lengths. Of course, this property may not hold generically, but it seems reasonable to expect that for systems with decaying correlations, it should too be the case that the cumulants of the thermal potential decay as well. This hypothesis leads us to make the following definition.

\begin{restatable}[Thermal boundedness]{definition}{thermalbnd}
    Given a graph $G = (\mathcal{V}, \mathcal{E})$ and $H = \sum_{e \in \mathcal{E}} h_e$, we say that $H$ is \emph{thermally bounded} if, for any $\mathcal{E}' \subset \mathcal{E}$, $H' = \sum_{e \in \mathcal{E}'} h_e$, and $\mathcal{V}' \subset \mathcal{V}$ supported on the endpoints of elements of $\mathcal{E}'$, there exists $K(\beta), k(\beta) \in \mathbb{R}^{+}$ (where we assume $K(\beta) \in O(\exp(\beta))$ and $k(\beta) \in O(\beta^{-1})$ in the worst case), such that
    \begin{equation}
        || V_{\text{thermal}}^{(j)}(H'; \mathcal{V}', \beta)|| \leq K(\beta) e^{-k(\beta) j}
    \end{equation}
    where $V_{\text{thermal}}^{(j)}(H'; \mathcal{V}', \beta)$ is the $j$-th cumulant away from $\mathcal{V}'$ at temperature $1/\beta$, given in Def.~\ref{def:cum}. We say that a thermal state is thermally bounded if its corresponding effective Hamiltonian is thermally bounded.
\end{restatable}

\noindent By assuming this property, we can prove rigorous bounds on the error between the left-hand and right-hand sides of Eq.~\eqref{eq:sw}, which involve the behaviour of the thermal potential, as well as other quantities characterizing the system.

\begin{remark}[Thermal boundedness of Markov states]
Before proceeding, let us briefly note how the above definition subsumes the original Markov property. Suppose $H$ is a graphical Hamiltonian on graph $G = (\mathcal{V}, \mathcal{E})$ of the form $H = \sum_{e \in \mathcal{E}} h_e$. Pick some $\mathcal{E}' \subset \mathcal{E}$ and let $H' = \sum_{e \in \mathcal{E}'} h_e$ so that the corresponding thermal state
\begin{equation}
    e^{-\beta H'} =  \bigodot_{e \in \mathcal{E}'} e^{-\beta h_e}
\end{equation}
has the Markov property, where we assume that $e^{-\beta H'}$ has unit trace without loss of generality. It follows from the Markov property, and thus commutativity of the partial trace with circle products, that for some $\mathcal{V}' \subset \mathcal{V}$, we will have
\begin{equation}
    \text{Tr}_{\mathcal{V}'}\left[e^{-\beta H'}\right] = \left[ \bigodot_{e \in \mathcal{E}'_\text{out}} e^{-\beta h_e} \right] \odot \text{Tr}_{\mathcal{V'}} \left[ \bigodot_{e \in \mathcal{E}'_\text{in}} e^{-\beta h_e} \right]
\end{equation}
where $\mathcal{E}'_{\text{out}}$ is the subset of $\mathcal{E}'$ having no endpoints in $\mathcal{V}'$, while $\mathcal{E}_{\text{in}}'$ is the subset of $\mathcal{E}'$ having at least one endpoint in $\mathcal{V}'$. From here, we have
\begin{align}
    V_{\text{thermal}}(H'; \mathcal{V}', \beta) &= \displaystyle\sum_{e \in \mathcal{E}'_{\text{out}}} h_e - \frac{1}{\beta} \log \left( \text{Tr}_{\mathcal{V}'} \left[ \bigodot_{e \in \mathcal{E}'_{\text{in}}} e^{-\beta h_e} \right] \right) - \displaystyle\sum_{e \in \mathcal{E}'} h_e
    \\ &= \displaystyle\sum_{e \in \mathcal{E}'_{\text{in}}} h_e - \frac{1}{\beta} \log \left( \text{Tr}_{\mathcal{V}'} \left[ \bigodot_{e \in \mathcal{E}'_{\text{in}}} e^{-\beta h_e} \right] \right).
\end{align}
This operator is supported entirely on $v \in \mathcal{V}$ which lie at distance at most $1$ from $\mathcal{V}'$. It follows that $V_{\text{thermal}}^{(j)}(H'; \mathcal{V}', \beta) = 0$ for $j \geq 2$. It follows immediately that $H$ is thermally bounded, provided that there is an upper bound on $V_{\text{thermal}}$ which scales at worst exponential in $\beta$ (a fact which is straightforward to check).
\end{remark}

\section{Main result}
\label{sec:main_results}

\noindent Having presented the relevant theory, we move on to presenting the main result of this work: characterizing the error introduced in a single step of the sliding-window belief propagation algorithm, for a given \emph{sliding-window size} (the parameter $\ell$).

First, some basic terminology. Given a tree graph $G = (\mathcal{V}, \mathcal{E})$, and a region $\mathcal{V}' \subset \mathcal{V}$, define 
\begin{equation}
B_{\ell}(\mathcal{V}') = \{(v, w) \ | \ (v, w) \in \mathcal{E}, d(v, \mathcal{V}') = \ell \ \text{or} \ d(w, \mathcal{V}') = \ell\}.
\end{equation}
where $d(v, \mathcal{V}')$ is the shortest distance along graph edges between $v$ and some vertex contained in $\mathcal{V}$. In addition, let 
\begin{align}
    L_{\ell}(\mathcal{V}') = \{(v, w) \ | \ (v, w) \in \mathcal{E}, d(v, \mathcal{V}'), d(w, \mathcal{V}') > \ell\}, \\ 
    R_{\ell}(\mathcal{V}') = \{(v, w) \ | \ (v, w) \in \mathcal{E}, d(v, \mathcal{V}'), d(w, \mathcal{V}') < \ell\}.
\end{align}

\begin{figure}
    \centering
    \includegraphics[width=300pt]{./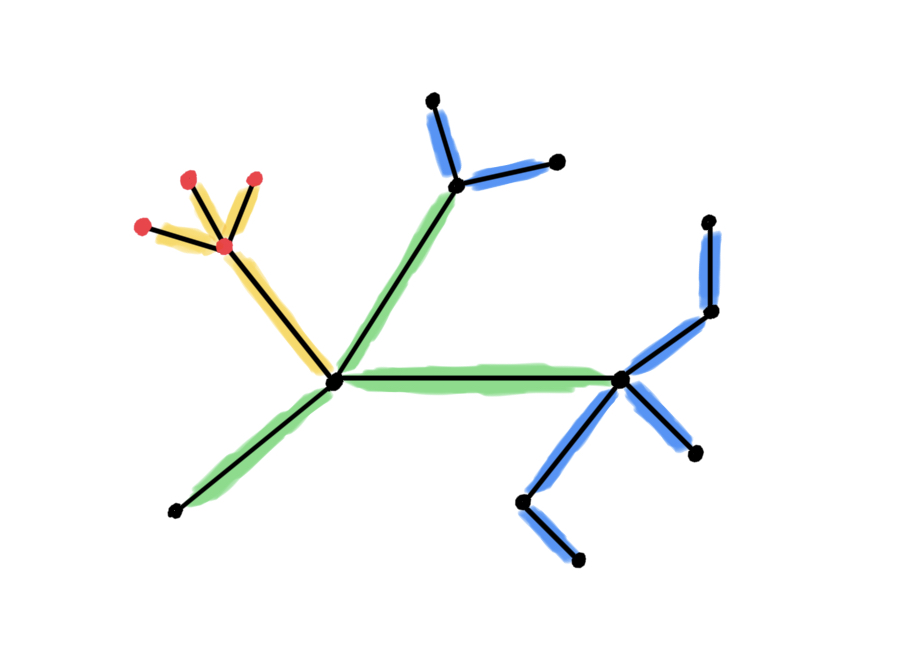}
    \caption{A diagram showing a tree graph, with a specified region $\mathcal{V}' \subset \mathcal{V}$ (the red/light-shaded nodes), along with the regions $B_{\ell}(\mathcal{V}')$ (the green/three central edges), $L_{\ell}(\mathcal{V}')$ (the blue/outer dark-shaded edges), and $R_{\ell}(\mathcal{V}')$ (the yellow edges, those adjacent to the lightly-shaded nodes), for distance $\ell = 0$.}
\end{figure}

\noindent Further, given a local Hamiltonian $H = \sum_{e \in \mathcal{E}} h_e$, define $H_{\mathcal{E}'}$ for $\mathcal{E}' \subset \mathcal{E}$ as $\sum_{e \in \mathcal{E}'} h_e$. We arrive at the following theorem.

\begin{restatable}[Error in single sliding-window belief propagation step]{theorem}{errorsinglestep}
\label{thm:error_single_step}
Let $(G, \rho_{\mathcal{V}})$ be a quantum bifactor network on tree graph $G = (\mathcal{V}, \mathcal{E})$, where $\rho_{\mathcal{V}}$ is the thermal state at temperature $1/\beta$ of the local Hamiltonian $H = \sum_{e \in \mathcal{E}} h_e$. Let $v^{*}$ be a leaf node of the tree. Assume that $H$ is thermally bounded with respect to functions $K(\beta)$ and $k(\beta)$. Then, there exist efficiently computable, non-negative $\widetilde{F}, \widetilde{G} \in \mathcal{O}(\exp(||H_B|| \beta))$ and efficiently computable, non-negative $\widetilde{f} \in \mathcal{O}(\beta^{-1})$ such that for some $\ell \in \mathbb{Z}^{+}$,
\begin{equation}
    \left|\left| \text{Tr}_{v^{*}}\left[ \frac{1}{\mathcal{Z}} e^{-\beta H} \right] - \frac{1}{\mathcal{Z}} \left[ e^{-\beta (H_L + H_B)} \odot \text{Tr}_{v^{*}}\left[ e^{-\beta H_R} \right] \right] \right|\right|_1 \leq (\ell \widetilde{F}(\beta, H_B) + \widetilde{G}(\beta, H_B)) e^{-\widetilde{f}(\beta) \ell}
\end{equation}
where $H_B \coloneqq H_{B_{\ell}(v^{*})}$, $H_L \coloneqq H_{L_{\ell}(v^{*})}$, and $H_R \coloneqq H_{R_{\ell}(v^{*})}$.
\end{restatable}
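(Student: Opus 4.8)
\emph{Proof proposal.} The plan is to rewrite both operators inside the trace-norm as matrix exponentials of effective Hamiltonians and then control their difference perturbatively. First I would use the defining property of the thermal potential (Def.~\ref{def:thermal_potential}) to write the exact reduced state as $\text{Tr}_{v^*}[\frac{1}{\mathcal{Z}}e^{-\beta H}] = e^{-\beta(H + V)}$ with $V := V_{\text{thermal}}(H; v^*, \beta)$. For the approximation I would apply the same identity to the window Hamiltonian $H_R$ to obtain $\text{Tr}_{v^*}[e^{-\beta H_R}] = \mathcal{Z}_R\, e^{-\beta(H_R + V_R)}$, with $V_R := V_{\text{thermal}}(H_R; v^*, \beta)$ and $\mathcal{Z}_R := \text{Tr}[e^{-\beta H_R}]$. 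Since $B_\ell$, $L_\ell$, $R_\ell$ partition $\mathcal{E}$ (on a tree, adjacent vertices differ in distance to $v^*$ by exactly one, so each edge has endpoints at distances $(d,d+1)$ and lands in exactly one block), we have $H = H_L + H_B + H_R$, and the circle-product identity $e^X \odot e^Y = e^{X+Y}$ (Def.~\ref{def:circle_prod}), together with scalar-linearity of $\odot$, collapses the approximation to $\frac{\mathcal{Z}_R}{\mathcal{Z}}\, e^{-\beta(H + V_R)}$. The whole problem therefore reduces to bounding $\| e^{-\beta(H+V)} - \frac{\mathcal{Z}_R}{\mathcal{Z}}\, e^{-\beta(H+V_R)} \|_1$, i.e. comparing two quasi-thermal states whose effective Hamiltonians differ only by $V - V_R$.

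Second, I would split this into an operator-difference part and a scalar normalization part. Setting $\tau := \text{Tr}[e^{-\beta(H+V_R)}]$, the triangle inequality bounds the quantity by a genuine normalized-state comparison $\| e^{-\beta(H+V)} - e^{-\beta(H+V_R)}/\tau \|_1$ plus a mismatch term $|\mathcal{Z}_R \tau/\mathcal{Z} - 1|$. For the first term I would use the DuHamel representation $e^A - e^B = \int_0^1\int_0^1 e^{tC_s}(A-B)e^{(1-t)C_s}\,dt\,ds$, with $C_s$ interpolating between the exponents, and bound its trace norm by $\beta\|V - V_R\|$ times a factor controlled by $\text{Tr}[e^{-\beta(H+V_R)}]$ and $\text{Tr}[e^{-\beta(H+V)}]$. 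The normalization mismatch I would control by comparing partition functions; since the two effective Hamiltonians differ by the boundary coupling $H_B$ together with the potential correction, each such comparison costs a multiplicative factor of at most $e^{\beta\|H_B\|}$, which I expect to be the origin of the claimed $\mathcal{O}(\exp(\|H_B\|\beta))$ prefactors $\widetilde F$ and $\widetilde G$.

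Third, and this is the heart of the estimate, I would bound $\|V - V_R\|$ using thermal boundedness and the cumulant decomposition (Def.~\ref{def:cum}). Because $H_R$ is supported within distance $\ell - 1$ of $v^*$, the potential $V_R$ is supported there as well, so its cumulants away from $v^*$ vanish for $j \geq \ell$. The key claim would be that the low-order cumulants coincide, $V^{(j)} = V_R^{(j)}$ for $j \leq \ell - 1$, since the distance-$j$ part of the thermal potential produced by tracing out $v^*$ should be determined by the Hamiltonian terms within distance $j$, which are identical in $H$ and $H_R$ inside the window. Granting this, $V - V_R = \sum_{j \geq \ell} V^{(j)}$, and thermal boundedness supplies $\|V^{(j)}\| \leq K(\beta)e^{-k(\beta)j}$; summing the tail (weighted by the $j$-dependent reconstruction of $V$ from its cumulants) yields a bound of the form $(\ell\, \widetilde F + \widetilde G)\, e^{-k(\beta)\ell}$, where the linear-in-$\ell$ prefactor arises from $\sum_{j \geq \ell} j\, e^{-k(\beta)j} \sim \ell\, e^{-k(\beta)\ell}$. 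Identifying $\widetilde f(\beta)$ with $k(\beta) \in \mathcal{O}(\beta^{-1})$ then reproduces the asserted decay rate and scalings.

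The main obstacle I anticipate is justifying the cumulant-matching claim $V^{(j)} = V_R^{(j)}$ for $j < \ell$ (or, absent exact equality, bounding their difference by the same exponentially small tail). Thermal potentials are genuinely nonlocal — logarithms of partial traces of matrix exponentials — so the assertion that their near-field cumulants depend only on the local Hamiltonian data shared by $H$ and $H_R$ is not immediate and must be extracted carefully, presumably by interpolating between $H$ and $H_R$ along the couplings of $B_\ell$ and $L_\ell$ and then invoking a locality estimate showing that the derivative of a near cumulant with respect to these far couplings is exponentially small. A secondary, more bookkeeping obstacle will be propagating the constants through the DuHamel bound and the partition-function comparisons to confirm the precise $\mathcal{O}(\exp(\|H_B\|\beta))$ and $\mathcal{O}(\beta^{-1})$ dependencies claimed for $\widetilde F$, $\widetilde G$, and $\widetilde f$.
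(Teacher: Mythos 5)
Your reduction in the first two steps is correct and clean: writing the exact reduced state as $e^{-\beta(H+V)}$ with $V = V_{\text{thermal}}(H; v^{*}, \beta)$ and collapsing the approximation to $\frac{\mathcal{Z}_R}{\mathcal{Z}} e^{-\beta(H + V_R)}$ with $V_R = V_{\text{thermal}}(H_R; v^{*}, \beta)$ is a legitimate reformulation, and your Duhamel/partition-function bookkeeping would go through if you had a bound of the form $||V - V_R|| \lesssim \mathrm{poly}(\ell)\, e^{-k(\beta)\ell}$ (up to the scalar part of $V - V_R$, which cancels against $\mathcal{Z}_R/\mathcal{Z}$). But the cumulant-matching claim $V^{(j)} = V_R^{(j)}$ for $j < \ell$ — which you correctly identify as the crux — is a genuine gap, not a technicality. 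Thermal boundedness controls only the \emph{far-field} cumulants of each potential separately; it says nothing about the \emph{near-field} difference between the two potentials. The near-field part of $V$ encodes how $H_B + H_L$ modifies the Gibbs correlations between $v^{*}$ and its neighbourhood, which $V_R$ (computed from the truncated Gibbs state $e^{-\beta H_R}$) does not see, so exact equality fails even at $j = 1$ in generic non-Markov models; and bounding the difference by $e^{-k(\beta)\ell}$ is essentially equivalent in difficulty to the theorem itself. Your proposed fix (interpolate along the couplings of $B_\ell \cup L_\ell$ and invoke a locality estimate for the derivative of the near cumulants) would require precisely the Hastings/Lieb-Robinson machinery you were hoping to avoid, and no hypothesis of the theorem hands it to you.

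The paper circumvents this by never comparing near-field data at all. It uses Hastings' belief propagation (Thm.~\ref{thm:hastings}) with perturbation $V = H_B$ to write $e^{-\beta H} = O\, e^{-\beta(H_L + H_R)} O^{\dagger}$, truncates $O$ to a local $O_\ell$ (via a locality result of Kato--Brand\~{a}o) so that $\text{Tr}_{v^{*}}$ commutes through it exactly, and then bounds $||O - O_{\text{eff}}||$, where $O_{\text{eff}}$ is the Hastings operator for the effective Hamiltonian $H_{\text{eff}} = H_L + H_R + V_{\text{thermal}}(H_R; v^{*}, \beta)$. The difference of the two $\Phi_\beta$ integrands reduces to comparing the time evolution of $H_B$ under $H'(s)$ versus $H_{\text{eff}}(s)$, Hamiltonians differing by the \emph{single} thermal potential $V_{\text{thermal}}(H_R; v^{*}, \beta)$; since $H_B$ sits at distance $\ell$ from $v^{*}$, only that potential's far cumulants enter, and these are exactly what thermal boundedness controls (Lem.~\ref{lem:local_time}, combined with an assumed Lieb-Robinson bound and a careful split of the $f_\beta$ integral at $t = r\ell$ to tame the $e^{av|t|}$ growth). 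That structural move — trading a comparison of two effective Hamiltonians for a commutator estimate against one decaying potential — is the missing idea in your proposal.
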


\noindent The proof of this lemma relies heavily on results of Hastings, which characterize perturbations of thermal states, \emph{within the exponential}, via conjugation of the unperturbed thermal state by a local operator (Ref.~\cite{hastings2007quantum}).

    \begin{restatable}[Hastings' belief-propagation]{theorem}{hastings}
    \label{thm:hastings}
    Let $H(s) = H + sV$ be a family of local Hamiltonians supported on $\bigotimes_{v \in \mathcal{V}} \mathcal{H}_v$ for graph $G = (\mathcal{V}, \mathcal{E})$, where $s \in [0, 1]$. Then
    \begin{equation}
        \label{eq:anti_commute}
        \frac{d}{ds} e^{-\beta H(s)} = -\frac{\beta}{2} \left\{ e^{-\beta H(s)}, \Phi_{\beta}^{H(s)}(V) \right\},
    \end{equation}
    where
    \begin{equation}
        \Phi_{\beta}^{H(s)} = \displaystyle\int_{-\infty}^{\infty} dt f_{\beta}(t) e^{- i H(s) t} V e^{i H(s) t} \ \ \ \text{with} \ \ \ f_{\beta}(t) = \frac{1}{2 \pi} \displaystyle\int_{-\infty}^{\infty} d\omega \frac{\tanh(\beta \omega/2)}{\beta \omega/2} e^{i \omega t}.
    \end{equation}
    This immediately implies, via integrating Eq.~\eqref{eq:anti_commute}, that $e^{-\beta (H + V)} = e^{-\beta H} \odot e^{-\beta V} = O e^{-\beta H} O^{\dagger}$, with

\begin{equation}
    \label{eq:o}
    O = \mathcal{T} \exp \left[ -\frac{\beta}{2} \displaystyle\int_{0}^{1} ds' \Phi^{H(s')}_{\beta}(V) \right].
\end{equation}
\end{restatable}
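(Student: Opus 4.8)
The plan is to establish Eq.~\eqref{eq:anti_commute} by a direct computation in the eigenbasis of $H(s)$, and then to integrate the resulting first-order equation in $s$ to obtain the conjugation formula. First I would differentiate $e^{-\beta H(s)}$ via Duhamel's formula, using $\tfrac{d}{ds}H(s) = V$, to get
\[
\frac{d}{ds} e^{-\beta H(s)} = -\beta \int_0^1 d\lambda\, e^{-\lambda\beta H(s)}\, V\, e^{-(1-\lambda)\beta H(s)}.
\]
Fixing eigenstates $H(s)|a\rangle = E_a |a\rangle$ (suppressing the $s$-dependence), the matrix elements collapse the $\lambda$-integral to the elementary form
\[
\langle a|\tfrac{d}{ds} e^{-\beta H(s)}|b\rangle = -\langle a|V|b\rangle\,\frac{e^{-\beta E_b} - e^{-\beta E_a}}{E_a - E_b}.
\]

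Next I would compute the right-hand side of Eq.~\eqref{eq:anti_commute} in the same basis. Since $\Phi_\beta^{H(s)}(V)$ is diagonalized by imaginary-time conjugation, the stated Fourier pair $f_\beta(t) \leftrightarrow \tanh(\beta\omega/2)/(\beta\omega/2)$ gives $\langle a|\Phi_\beta^{H(s)}(V)|b\rangle = \langle a|V|b\rangle\,\tanh(\beta(E_a-E_b)/2)/(\beta(E_a-E_b)/2)$. The anticommutator with $e^{-\beta H(s)}$ then produces the symmetric factor $(e^{-\beta E_a} + e^{-\beta E_b})$, so the required identity reduces to the elementary hyperbolic relation $e^{-\beta E_b} - e^{-\beta E_a} = (e^{-\beta E_a} + e^{-\beta E_b})\tanh(\beta(E_a - E_b)/2)$, which follows by writing $E_a, E_b$ in terms of their mean and half-difference. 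Matching the two matrix-element computations establishes Eq.~\eqref{eq:anti_commute}.

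To integrate, I would adopt the conjugation ansatz $e^{-\beta H(s)} = O(s)\, e^{-\beta H}\, O(s)^\dagger$ with $O(0) = I$, and check its consistency with Eq.~\eqref{eq:anti_commute}. Setting $K(s) = \dot{O}(s) O(s)^{-1}$, differentiation yields $\tfrac{d}{ds} e^{-\beta H(s)} = K\, e^{-\beta H(s)} + e^{-\beta H(s)} K^\dagger$. Because $\Phi_\beta^{H(s)}(V)$ is Hermitian (as $V$ is Hermitian and $f_\beta$ is real and even), the choice $K(s) = -\tfrac{\beta}{2}\Phi_\beta^{H(s)}(V)$ collapses this to the anticommutator of Eq.~\eqref{eq:anti_commute}. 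Solving $\dot{O} = K O$ with $O(0) = I$ gives precisely the time-ordered exponential of Eq.~\eqref{eq:o}; evaluating at $s=1$ yields $e^{-\beta(H+V)} = O\, e^{-\beta H}\, O^\dagger$, while $e^{-\beta(H+V)} = e^{-\beta H} \odot e^{-\beta V}$ is immediate from Def.~\ref{def:circle_prod}.

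The main obstacle is the kernel-matching in the second step: one must correctly evaluate the real-time Fourier integral defining $\Phi_\beta^{H(s)}(V)$ and recognize that the $\tanh$ kernel is exactly what converts the Duhamel factor $(e^{-\beta E_b} - e^{-\beta E_a})/(E_a - E_b)$ into the symmetric anticommutator form. Some care is also needed to justify convergence of the $t$-integral defining $\Phi$, which rests on the decay of $f_\beta(t)$; but since $\tanh(\beta\omega/2)/(\beta\omega/2)$ is bounded and smooth, $f_\beta$ is well-behaved and $\Phi$ is a bounded operator whenever $V$ is, so these analytic points are routine rather than substantive.
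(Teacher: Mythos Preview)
Your proof is correct and is essentially the standard derivation of this identity. Note, however, that the paper does not actually prove Thm.~\ref{thm:hastings}: it is stated as a cited result (Refs.~\cite{hastings2007quantum,kato2019quantum}) and used as a black box in the proof of Thm.~\ref{thm:error_single_step}. The paper only remarks that the conjugation formula follows ``via integrating Eq.~\eqref{eq:anti_commute}'', without carrying out either the eigenbasis computation for the anticommutator identity or the ODE argument for $O(s)$. So your proposal supplies what the paper omits.

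One minor point worth tightening: in the integration step you phrase the argument as ``adopt the ansatz and check consistency''. To make this rigorous, you should instead \emph{define} $O(s)$ as the solution of $\dot O = K(s)O$, $O(0)=I$ (which exists since $K(s) = -\tfrac{\beta}{2}\Phi_\beta^{H(s)}(V)$ is bounded and continuous in $s$), set $\rho(s) = O(s)e^{-\beta H}O(s)^\dagger$, and then observe that both $\rho(s)$ and $e^{-\beta H(s)}$ satisfy the same linear first-order ODE $\dot X = -\tfrac{\beta}{2}\{\Phi_\beta^{H(s)}(V),X\}$ with the same initial condition, so they coincide by uniqueness. As written, your ansatz-plus-consistency phrasing leaves open whether such an $O(s)$ exists in the first place; this reordering closes that gap.
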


\noindent At a high-level,  Thm.~\ref{thm:error_single_step} utilizes Thm.~\ref{thm:hastings} in order to decompose a thermal state such that partial traces can be propagated to their desired locations. It then makes use of the thermal boundedness property to reason, via a Lieb-Robinson bound, that the original thermal state after perturbation and the effective thermal state obtained after performing the partial trace are sufficiently close. For the full proof of Thm.~\ref{thm:error_single_step}, see Appx.~\ref{appx:proofs}.

\section{Discussion and future work}
\label{sec:conclusion}

\noindent In this work, we aimed to put sliding-window belief propagation on a rigorous footing. We demonstrated that the error accumulated due to propagation of partial traces when computing messages is exponentially-bounded by the size of the chosen sliding-window. There are many future avenues of work which will be necessary in order to tighten this bound for particular instances of quantum belief propagation. Firstly, the notion of thermal boundedness, while well-defined, is somewhat opaque and not well-understood, relative to other properties such as rapidly decaying correlation coefficients. It is still an open problem as to how this property relates, generally or for specific cases, to better-characterized attributes of many-body systems. It would be interesting to develop operators that estimate or witness thermal boundedness or unboundedness, in analogy with entanglement witnesses. Additionally, explicit calculation of the constants associated with the Lieb-Robinson bound utilized in the proof of Thm.~\ref{thm:error_single_step} will vary on a case-by-case basis, so it remains to be seen exactly which kinds of system will exhibit more favourable error scaling. It is important to note that in the limit of $\beta \to \infty$, the derived bound goes to $\infty$ exponentially quickly. We do expect this bound to fail in the zero-temperature limit, but it would be interesting to investigate whether the bound can be improved at low-temperatures, perhaps with more favourable scaling in $\beta$ (if not generically, then for particular classes of systems).

Finally, verification and further study of the results in this paper and some of the future questions being posed via large-scale numerical simulations is likely a promising avenue of further inquiry. Since thermal boundedness is distinct from (though perhaps not completely independent of) long range entanglement, quantum belief propagation and tensor network state representations could potentially be combined to further expand the space of states efficiently simulable on a classical computer. Ultimately, it is our hope that realizing the quantum belief-propagation algorithm and its sliding-window generalization via numerical methods, and possibly in tandem with quantum computers, as was envisioned by Ref.~\cite{leifer2008quantum}, for the purpose of computing properties of large thermal systems, will lead to the discovery and further understanding of the capabilities and limitations of these algorithms on a system-by-system basis, and that rigorous bounds such as those introduced in this work will act as a helping hand in understanding their general behaviour.

\section{Acknowledgements}

\noindent The authors thank Owen Lockwood and Patrick Huembeli for insightful comments and feedback on drafts of this manuscript. J.C. thanks Nathan Wiebe for helpful conversations about Lieb-Robinson bounds and other related topics. Finally, the authors would like to dedicate this paper to the memory of David Poulin, a exceptional scientist, whose foundational contributions made this work possible.

\bibliography{main}

\begin{thebibliography}{10}

\bibitem{bilgin2010coarse}
Ersen Bilgin and David Poulin.
\newblock Coarse-grained belief propagation for simulation of interacting
  quantum systems at all temperatures.
\newblock {\em Physical Review B}, 81(5):054106, 2010.

\bibitem{georgescu2014quantum}
Iulia~M Georgescu, Sahel Ashhab, and Franco Nori.
\newblock Quantum simulation.
\newblock {\em Reviews of Modern Physics}, 86(1):153, 2014.

\bibitem{2006quant.ph..8197P}
D.~{Perez-Garcia}, F.~{Verstraete}, M.~M. {Wolf}, and J.~I. {Cirac}.
\newblock {Matrix Product State Representations}.
\newblock {\em arXiv e-prints}, pages quant--ph/0608197, August 2006.

\bibitem{2019NatRP...1..538O}
Rom{\'a}n {Or{\'u}s}.
\newblock {Tensor networks for complex quantum systems}.
\newblock {\em Nature Reviews Physics}, 1(9):538--550, August 2019.

\bibitem{casares2023grad}
Pablo~AM Casares, Jack~S Baker, Matija Medvidovic, Roberto~dos Reis, and
  Juan~Miguel Arrazola.
\newblock Grad dft: a software library for machine learning enhanced density
  functional theory.
\newblock {\em arXiv preprint arXiv:2309.15127}, 2023.

\bibitem{sun2018pyscf}
Qiming Sun, Timothy~C Berkelbach, Nick~S Blunt, George~H Booth, Sheng Guo,
  Zhendong Li, Junzi Liu, James~D McClain, Elvira~R Sayfutyarova, Sandeep
  Sharma, et~al.
\newblock Pyscf: the python-based simulations of chemistry framework.
\newblock {\em Wiley Interdisciplinary Reviews: Computational Molecular
  Science}, 8(1):e1340, 2018.

\bibitem{hohenberg1964inhomogeneous}
Pierre Hohenberg and Walter Kohn.
\newblock Inhomogeneous electron gas.
\newblock {\em Physical review}, 136(3B):B864, 1964.

\bibitem{2016arXiv160102036A}
Mohammad~H. {Amin}, Evgeny {Andriyash}, Jason {Rolfe}, Bohdan {Kulchytskyy},
  and Roger {Melko}.
\newblock {Quantum Boltzmann Machine}.
\newblock {\em arXiv e-prints}, page arXiv:1601.02036, January 2016.

\bibitem{2022PhRvA.106a2409K}
Dominik {Koutn{\'y}}, Libor {Motka}, Zden{\v{e}}k {Hradil}, Jaroslav
  {{\v{R}}eh{\'a}{\v{c}}ek}, and Luis~L. {S{\'a}nchez-Soto}.
\newblock {Neural-network quantum state tomography}.
\newblock {\em \pra}, 106(1):012409, July 2022.

\bibitem{1995quant.ph.11026K}
A.~Yu. {Kitaev}.
\newblock {Quantum measurements and the Abelian Stabilizer Problem}.
\newblock {\em arXiv e-prints}, pages quant--ph/9511026, November 1995.

\bibitem{2020arXiv201209265C}
M.~{Cerezo}, Andrew {Arrasmith}, Ryan {Babbush}, Simon~C. {Benjamin}, Suguru
  {Endo}, Keisuke {Fujii}, Jarrod~R. {McClean}, Kosuke {Mitarai}, Xiao {Yuan},
  Lukasz {Cincio}, and Patrick~J. {Coles}.
\newblock {Variational Quantum Algorithms}.
\newblock {\em arXiv e-prints}, page arXiv:2012.09265, December 2020.

\bibitem{leifer2008quantum}
Matthew~S Leifer and David Poulin.
\newblock Quantum graphical models and belief propagation.
\newblock {\em Annals of Physics}, 323(8):1899--1946, 2008.

\bibitem{10.5555/779343.779352}
Jonathan~S. Yedidia, William~T. Freeman, and Yair Weiss.
\newblock {\em Understanding belief propagation and its generalizations}, page
  239–269.
\newblock Morgan Kaufmann Publishers Inc., San Francisco, CA, USA, 2003.

\bibitem{2006PhRvB..73i4423V}
F.~{Verstraete} and J.~I. {Cirac}.
\newblock {Matrix product states represent ground states faithfully}.
\newblock {\em \prb}, 73(9):094423, March 2006.

\bibitem{hastings2007quantum}
Matthew~B Hastings.
\newblock Quantum belief propagation: An algorithm for thermal quantum systems.
\newblock {\em Physical Review B}, 76(20):201102, 2007.

\bibitem{kato2019quantum}
Kohtaro Kato and Fernando~GSL Brandao.
\newblock Quantum approximate markov chains are thermal.
\newblock {\em Communications in Mathematical Physics}, 370:117--149, 2019.

\bibitem{hayden2004structure}
Patrick Hayden, Richard Jozsa, Denes Petz, and Andreas Winter.
\newblock Structure of states which satisfy strong subadditivity of quantum
  entropy with equality.
\newblock {\em Communications in mathematical physics}, 246:359--374, 2004.

\end{thebibliography}

\appendix

\section{Proof of Thm.~\ref{thm:error_single_step}}
\label{appx:proofs}

\noindent We dedicate this section to providing a proof of Thm.~\ref{thm:error_single_step}. Throughout the subsequent derivations, we use $||\cdot||$ (and sometimes $||\cdot||_2$, when we want to be explicit) to denote the operator/spectral norm and $||\cdot||_1$ to denote the trace norm.

\errorsinglestep*

\noindent Recall the theorem characterizing Hastings' belief propagation (we utilize the modified version of Ref.~\cite{kato2019quantum}).

\hastings*

\noindent The proof of Thm.~\ref{thm:error_single_step} essentially reduces to proving two individual bounds, and combining them via a triangle inequality. We separate this section into two subsections, dedicated to each of these bounds, and conclude with their combination as the final, desired bound.

\subsection{The first bound}
\label{app:1}

\noindent Suppose $G = (\mathcal{V}, \mathcal{E})$ is a tree graph, let $v^{*}$ be a leaf-node of the tree. Now, let $H = \sum_{e \in \mathcal{E}} h_e$ be a local Hamiltonian, let $V = H_B \coloneqq H_{B_{\ell}(v^{*})}$ (we call this the Hamiltonian on the \emph{buffer region}). We define $H' = H - V$. We then have, making use of Eq.~\eqref{eq:o},
\begin{equation}
\label{eq:bp}
    \frac{1}{\mathcal{Z}} e^{-\beta H} = \frac{1}{\mathcal{Z}} \left( e^{-\beta H'} \odot e^{-\beta V} \right) = \frac{1}{\mathcal{Z}} O e^{-\beta H'} O^{\dagger} = \frac{1}{\mathcal{Z}} O e^{-\beta H_{L_{\ell}(v^{*})}} e^{-\beta H_{R_{\ell}(v^{*})}} O^{\dagger}.
\end{equation}
where we have
\begin{equation}
e^{-\beta H'} = e^{-\beta [H_{R_{\ell}(v^{*})} + H_{L_{\ell}(v^{*})}]} = e^{-\beta H_{R_{\ell}(v^{*})}} e^{-\beta H_{L_{\ell}(v^{*})}}
\end{equation}
because $H_{R_{\ell}(v^{*})}$ and $H_{L_{\ell}(v^{*})}$ have disjoint support, and therefore commute. Going forward, we simplify notation by letting $H_L = H_{L_{\ell}(v^{*})}$ and $H_R = H_{R_{\ell}(v^{*})}$. It was shown in Ref.~\cite{kato2019quantum} that the operator $O$ of Eq.~\eqref{eq:o} is local, in the sense that it can be locally-approximately in a region with exponentially decaying error in the size of the region. In particular, we have the following result.
\begin{theorem}[Ref.~\cite{kato2019quantum}]
    For operator $O$ of Thm.~\ref{thm:hastings} corresponding to Hamiltonian $H(s) = H + sV$ and inverse temperature $\beta$, there exists $O_{\ell}$ such that 
    \begin{equation}
        ||O - O_{\ell}|| \leq \frac{c \beta ||V||}{2} e^{\frac{(1 + c) \beta ||V||}{2}} e^{-\frac{c\ell}{1 + c\alpha\beta/\pi}}
    \end{equation}
    for positive real numbers $c$ and $\alpha$, where $\mathrm{supp}(O_{\ell})$ is contained in a ball of radius $\ell$ around $\mathrm{supp}(V) \subset \mathcal{V}$. Moreover, $||O_{\ell}||_1 \leq ||O||_1$.
\end{theorem}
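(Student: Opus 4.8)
The plan is to prove that the operator $O$ of Thm.~\ref{thm:hastings}, although globally defined, is \emph{quasi-local}: it is generated by integrating the real-time-evolved perturbation $V$ against the filter $f_\beta$, and both the Lieb--Robinson light cone and the decay of $f_\beta$ conspire to confine its support. Concretely, $O = \mathcal{T}\exp[-\frac{\beta}{2}\int_0^1 \Phi^{H(s')}_\beta(V)\,ds']$, so it suffices to (i) show each generator $\Phi^{H(s)}_\beta(V) = \int_{-\infty}^\infty dt\, f_\beta(t)\, e^{-iH(s)t}Ve^{iH(s)t}$ is approximately supported in a ball of radius $\ell$ about $\mathrm{supp}(V)$, and then (ii) lift this approximation through the time-ordered exponential by a Duhamel argument. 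The truncation itself I would realize via the conditional expectation $\mathbb{E}_{B_\ell}(X) = d_{\bar B}^{-1}\,\mathrm{Tr}_{\bar B}[X]\otimes I_{\bar B}$ onto the ball $B = B_\ell(\mathrm{supp}(V))$, a unital CPTP map; I then define $O_\ell := \mathbb{E}_{B_\ell}(O)$, which is automatically supported in $B$ and, because CPTP maps are contractive in trace norm, automatically satisfies $\|O_\ell\|_1 \le \|O\|_1$.

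Step (i) is the technical heart. Writing $V(t) = e^{-iH(s)t}Ve^{iH(s)t}$, I split the $t$-integral at a cutoff $T$. On $|t| \le T$, a Lieb--Robinson bound for the uniformly-local family $H(s) = H + sV$ gives $\|V(t) - \mathbb{E}_{B_\ell}(V(t))\| \le c_0\,\|V\|\,e^{-c(\ell - \alpha|t|)}$ for an LR velocity $\alpha$ and decay rate $c$; integrating against $\|f_\beta\|_1 < \infty$ contributes $\lesssim \|V\|\,e^{-c(\ell - \alpha T)}$. On $|t| > T$ I bound the integrand trivially by $2\|V\|$ and use that $f_\beta$ decays exponentially — its Fourier transform $\tanh(\beta\omega/2)/(\beta\omega/2)$ has its nearest singularity at $\omega = i\pi/\beta$, so $\int_{|t|>T}|f_\beta(t)|\,dt \lesssim e^{-\pi T/\beta}$ — contributing $\lesssim \|V\|\,e^{-\pi T/\beta}$. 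Balancing the two exponents by choosing $T = c\ell/(c\alpha + \pi/\beta)$ yields $\|\Phi^{H(s)}_\beta(V) - \mathbb{E}_{B_\ell}(\Phi^{H(s)}_\beta(V))\| \lesssim \|V\|\,e^{-c\ell/(1+c\alpha\beta/\pi)}$, uniformly in $s$, which is exactly the stated exponent.

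For step (ii), let $\hat O$ denote the time-ordered exponential generated by the truncated generators $\mathbb{E}_{B_\ell}(\Phi^{H(s)}_\beta(V))$; since products of $B$-supported operators remain $B$-supported, $\hat O$ is exactly supported in $B$. A Duhamel/telescoping estimate gives $O - \hat O = -\frac{\beta}{2}\int_0^1 U(1,s)\big(\Phi^{(s)} - \mathbb{E}_{B_\ell}(\Phi^{(s)})\big)W(s)\,ds$, and bounding each propagator norm by $e^{\frac{\beta}{2}\|f_\beta\|_1\|V\|}$ reproduces the prefactor $\tfrac{c\beta\|V\|}{2}\,e^{(1+c)\beta\|V\|/2}$ multiplying the generator error from step (i). Finally, because $\mathbb{E}_{B_\ell}$ fixes $B$-supported operators and is contractive in operator norm, $\|O - O_\ell\| = \|O - \mathbb{E}_{B_\ell}(O)\| \le \|O - \hat O\| + \|\mathbb{E}_{B_\ell}(\hat O - O)\| \le 2\|O - \hat O\|$, so the same bound (with $c$ absorbing the factor of $2$) holds for $O_\ell$, which simultaneously carries the trace-norm guarantee.

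I expect the main obstacle to be step (i): correctly trading the linear growth of the Lieb--Robinson light cone against the exponential decay of the filter $f_\beta$, and optimizing the cutoff $T$ so that the resulting decay rate $1/(1+c\alpha\beta/\pi)$ has the right temperature dependence (constant at high temperature, $\sim \pi/(c\alpha\beta)$ at low temperature). A secondary subtlety is the bookkeeping that lets a \emph{single} construction $O_\ell = \mathbb{E}_{B_\ell}(O)$ deliver exact localization, operator-norm closeness, and trace-norm non-increase at once; the conditional-expectation viewpoint is what makes all three compatible, since defining $O_\ell$ directly as the truncated time-ordered exponential $\hat O$ would give the first two but not obviously the third.
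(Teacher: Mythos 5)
The paper offers no internal proof of this statement: it is imported directly from Ref.~\cite{kato2019quantum}, and the surrounding text simply cites that work. So your proposal must be judged against the argument of the cited reference, and it is in substance a faithful and correct reconstruction of it. The two-scale mechanism you identify is exactly right: the generator $\Phi^{H(s)}_{\beta}(V)$ is quasi-local because the Lieb--Robinson cone controls $\|V(t)-\mathbb{E}_{B_\ell}(V(t))\|$ for $|t|\le T$ while the filter tail controls $|t|>T$ (the paper's own appendix computes $f_\beta(t)=\tfrac{2}{\beta\pi}\log\coth(\pi|t|/2\beta)$, confirming both $\|f_\beta\|_1=1$ and the $e^{-\pi T/\beta}$ tail you invoke), and your cutoff $T=c\ell/(c\alpha+\pi/\beta)$ reproduces the stated exponent $c\ell/(1+c\alpha\beta/\pi)$ exactly, including its temperature dependence. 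The Duhamel lift through the time-ordered exponential with propagator norms $e^{\beta\|V\|/2}$ (again using $\|f_\beta\|_1=1$) gives the prefactor structure, and your conditional-expectation definition $O_\ell=\mathbb{E}_{B_\ell}(O)$ is the clean way to secure all three conclusions at once: exact support in the ball, operator-norm closeness via the triangle inequality through the truncated exponential $\hat O$ (using that $\mathbb{E}_{B_\ell}$ fixes $\hat O$ and is a unital-CP contraction), and $\|O_\ell\|_1\le\|O\|_1$ from trace-norm contractivity of CPTP maps.

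Two bookkeeping caveats, neither a genuine gap. First, you cannot literally absorb the factor of $2$ (or the Lieb--Robinson prefactor $c_0$) ``into $c$'': the constant $c$ also sets the decay rate in $e^{-c\ell/(1+c\alpha\beta/\pi)}$, and enlarging it there strengthens the claimed decay rather than weakening it; those multiplicative constants must be tracked in the prefactor $\tfrac{c\beta\|V\|}{2}e^{(1+c)\beta\|V\|/2}$ (or the constants renamed), which the existential phrasing of the theorem accommodates but your writeup elides. Second, you should state explicitly that the Lieb--Robinson constants $(C,a,v)$ can be chosen uniformly over the family $H(s)=H+sV$, $s\in[0,1]$; this is immediate since $sV$ is a uniformly bounded sum of local terms, but it is needed for the $s$-uniformity that the Duhamel step consumes.
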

\noindent This particular fact does much of the heavy-lifting when it comes to demonstrating the desired error-bound on sliding-window belief-propagation. By this lemma, there exists an $(\ell - 2)$-local approximation of the $O$ of Eq.~\eqref{eq:bp}, $O_{\ell}$, such that
\begin{equation}
        ||O - O_{\ell}|| \leq \frac{c' \beta ||V||}{2} e^{\frac{(1 + c) \beta ||V||}{2}} e^{-\frac{c\ell}{1 + c\alpha\beta/\pi}}
    \end{equation}
where we take $(\ell - 2) \mapsto \ell$ in the exponential via the introduction of constant $c'$. It follows that
\begin{align}
    \left|\left| \frac{1}{\mathcal{Z}} O e^{-\beta H'} O^{\dagger} - \frac{1}{\mathcal{Z}} O_{\ell} e^{-\beta H'} O_{\ell}^{\dagger} \right|\right|_1 &\leq \frac{1}{\mathcal{Z}} \left(  || O e^{-\beta H'} O^{\dagger} -  O e^{-\beta H'} O_{\ell}^{\dagger}||_1 + || O e^{-\beta H'} O_{\ell}^{\dagger} -  O_{\ell} e^{-\beta H'} O_{\ell}^{\dagger} ||_1 \right) \nonumber
    \\ & = \frac{1}{\mathcal{Z}}\left(  ||O e^{-\beta H'} (O^{\dagger} - O_{\ell}^{\dagger}) ||_1 + ||(O - O_{\ell}) e^{-\beta H'} O_{\ell}^{\dagger}||_1\right) \nonumber
    \\ & \leq \left|\left| \frac{e^{-\beta H'}}{\mathcal{Z}} \right|\right|_1 (||O||_2 + ||O_{\ell}||_2)(||O - O_{\ell} ||_2) \nonumber
    \\ & \leq \frac{\mathcal{Z}'}{\mathcal{Z}} c' \beta ||V|| ||O|| e^{\frac{(1 + c) \beta ||V||}{2}} e^{-\frac{c\ell}{1 + c\alpha\beta/\pi}}
    \label{eq:ineq}
\end{align}
where we use a special case of Holder's inequality, in particular $||AB||_1 \leq ||A||_1 ||B||_2$, to get $||ABC||_1 \leq ||A||_1 ||BC||_2 \leq ||A||_1 ||B||_2 ||C||_2$, which gives the desired inequality at the second-last line. $\mathcal{Z} = \text{Tr}[ e^{-\beta H} ]$, $\mathcal{Z}' = \text{Tr}[ e^{-\beta H'} ]$ are partition functions. Note that $O_{\ell}$ is \emph{not} supported on $v^{*}$, as the distance between a node on which $V$ is supported and $v^{*}$ is at least $\ell - 1$, by definition of $O_{\ell}$. From here, it follows that we can propagate a partial trace over $v^{*}$ through $O_{\ell}$ in the following expression:
\begin{equation}
\label{eq:new_thermal}
    \text{Tr}_{v^{*}} \left[ \frac{1}{\mathcal{Z}} O_{\ell} e^{-\beta H'} O^{\dagger}_{\ell} \right] = \frac{1}{\mathcal{Z}} O_{\ell} e^{-\beta H_L} \text{Tr}_{v^{*}}[e^{-\beta H_R}] O_{\ell}^{\dagger} = \frac{1}{\mathcal{Z}} O_{\ell} \left( e^{-\beta H_L} \odot \text{Tr}_{v^{*}}[e^{-\beta H_R}] \right)O_{\ell}^{\dagger}.
\end{equation}
Note that $e^{-\beta H_L} \odot \text{Tr}_{v^{*}}[e^{-\beta H_R}] = \text{Tr}_{v^{*}}[e^{-\beta H'}]$ has the same partition function as $e^{-\beta H'}$, so via an identical chain of inequalities to those of Eq.~\eqref{eq:ineq}, we have
\begin{multline}
\left|\left| \text{Tr}_{v^{*}} \left[ \frac{1}{\mathcal{Z}} O_{\ell} e^{-\beta H'} O^{\dagger}_{\ell} \right] - \frac{1}{\mathcal{Z}} O \left( e^{-\beta H_L} \odot \text{Tr}_{v^{*}}[e^{-\beta H_R}] \right)O^{\dagger} \right|\right|_1 \\
    = \left|\left| \frac{1}{\mathcal{Z}} O_{\ell} \left( e^{-\beta H_L} \odot \text{Tr}_{v^{*}}[e^{-\beta H_R}] \right)O_{\ell}^{\dagger} - \frac{1}{\mathcal{Z}} O \left( e^{-\beta H_L} \odot \text{Tr}_{v^{*}}[e^{-\beta H_R}] \right)O^{\dagger} \right|\right|_1
    \\ \leq \frac{\mathcal{Z}'}{\mathcal{Z}} c' \beta ||V|| ||O|| e^{\frac{(1 + c) \beta ||V||}{2}} e^{-\frac{c\ell}{1 + cv\beta/\pi}}.
    \label{eq:ineq2}
\end{multline}
Note that partial trace decreases trace norm (see Lem.~\ref{lem:trace_norm}), which implies that
\begin{equation}
    \left|\left| \text{Tr}_{v^{*}} \left[ \frac{1}{\mathcal{Z}} O e^{-\beta H'} O^{\dagger}\right] - \text{Tr}_{v^{*}}\left[\frac{1}{\mathcal{Z}} O_{\ell} e^{-\beta H'} O_{\ell}^{\dagger}\right] \right|\right|_1 \leq \left|\left| \frac{1}{\mathcal{Z}} O e^{-\beta H'} O^{\dagger} - \frac{1}{\mathcal{Z}} O_{\ell} e^{-\beta H'} O_{\ell}^{\dagger} \right|\right|_1.
    \label{eq:ineq3}
\end{equation}
Thus, via a triangle inequality applied to Eq.~\eqref{eq:ineq}, Eq.~\eqref{eq:ineq2}, and Eq.~\eqref{eq:ineq3}, we have
\begin{equation}
    \left|\left| \text{Tr}_{v^{*}}\left[ \frac{1}{\mathcal{Z}} e^{-\beta H} \right] - \frac{1}{\mathcal{Z}} O \left( e^{-\beta H_L} \odot \text{Tr}_{v^{*}}[e^{-\beta H_R}] \right)O^{\dagger} \right|\right|_1 \leq \frac{2 c' \beta \mathcal{Z}' ||V|| ||O||}{\mathcal{Z}} e^{\frac{(1 + c) \beta ||V||}{2}} e^{-\frac{c\ell}{1 + c\alpha\beta/\pi}}.
\end{equation}
From here, it remains to bound the partition functions and norms of $O$ and $O_{\ell}$ in terms of known quantities. From Golden-Thompson inequality (Thm.~\ref{thm:golden_thompson}) followed by Holder's inequality,
\begin{equation}
    \frac{\mathcal{Z}'}{\mathcal{Z}} = \frac{\text{Tr}[e^{\beta(V - H)}]}{\text{Tr}[e^{-\beta H}]} \leq \frac{\text{Tr}[e^{-\beta H} e^{\beta V}]}{\text{Tr}[e^{-\beta H}]} = \frac{||e^{-\beta H} e^{\beta V}||_1}{||e^{-\beta H}||_1} \leq \frac{||e^{-\beta H}||_1 || e^{\beta V}||}{||e^{-\beta H}||_1} \leq e^{\beta ||V||}.
\end{equation}
And, of course, by the definition in Thm.~\ref{thm:hastings},
\begin{equation}
\label{eq:o_bnd}
||O|| = \left|\left| \exp \left[ -\frac{\beta}{2} \displaystyle\int_{0}^{1} ds' \Phi^{H(s')}_{\beta}(V) \right] \right|\right| \leq \max_{s' \in [0, 1]} \exp \left( \frac{\beta ||\Phi_{\beta}^{H(s')}(V)||}{2}\right) \leq \exp \left( \frac{\beta ||V||}{2} \displaystyle\int_{-\infty}^{\infty} dt |f_{\beta}(t)| \right),
\end{equation}
where we use $||e^{-iH(s)t} V e^{i H(s) t}|| = ||V||$. Recall that
\begin{align}
f_{\beta}(t) = \frac{1}{\beta \pi} \int_{-\infty}^{\infty} d\omega \frac{1}{\omega} \tanh\left(\frac{\beta \omega}{2} \right) e^{i\omega t}.
\end{align}

\begin{lemma}
For all $t \in \mathbb{R} - \{0\}$, we have
\begin{equation}
    |f_{\beta}(t)| = \frac{1}{\beta \pi} \left| \int_{-\infty}^{\infty} d\omega \frac{1}{\omega} \tanh\left(\frac{\beta \omega}{2} \right) e^{i\omega t}\right| = \frac{2}{\beta \pi} \left| \log \left( \coth \left( \frac{\pi |t|}{2\beta} \right) \right) \right|.
\end{equation}
\end{lemma}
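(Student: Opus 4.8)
The plan is to evaluate the Fourier integral by contour integration and residue summation, reducing it to a recognizable logarithmic series. Write $g(\omega) = \tfrac{1}{\omega}\tanh(\beta\omega/2)$. First I would record two structural facts. The apparent singularity at $\omega = 0$ is removable, since $\tanh(\beta\omega/2)/\omega \to \beta/2$ as $\omega \to 0$; and $g$ is real and even on the real axis, so the $\sin(\omega t)$ part of $e^{i\omega t}$ integrates to zero and the transform is real and even in $t$. It therefore suffices to compute the integral for $t > 0$ and replace $t$ by $|t|$ at the end, which also makes the overall sign nonnegative so that the absolute value in the statement becomes automatic. The genuine singularities of $g$ come from the zeros of $\cosh(\beta\omega/2)$, located at $\omega_n = i\pi(2n+1)/\beta$ for $n \in \mathbb{Z}$, all lying on the imaginary axis; these are simple poles.

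For $t > 0$ I would close the contour in the upper half-plane, where $e^{i\omega t}$ decays, and apply the residue theorem. Because the poles $\omega_n$ accumulate along the imaginary axis, one cannot use arbitrary large circles; instead I would take semicircular arcs of radius $R_N = 2\pi N/\beta$, which pass midway between consecutive poles. On these arcs $\tanh(\beta\omega/2)$ is uniformly bounded (a standard estimate), so $|g(\omega)| = O(1/R_N)$ and Jordan's lemma forces the arc contribution to vanish as $N \to \infty$. The residue of $\tanh(\beta\omega/2)$ at $\omega_n$ equals $2/\beta$, using $\tfrac{d}{d\omega}\cosh(\beta\omega/2) = \tfrac{\beta}{2}\sinh(\beta\omega/2)$; hence the residue of $g(\omega)e^{i\omega t}$ at $\omega_n$ is $\tfrac{2}{\beta}\, e^{i\omega_n t}/\omega_n$, and since $e^{i\omega_n t} = e^{-\pi(2n+1)t/\beta}$ this equals $\tfrac{2}{i\pi(2n+1)}e^{-\pi(2n+1)t/\beta}$. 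Summing over the upper-half-plane poles $n \ge 0$ and multiplying by $2\pi i$ gives
\begin{equation}
\int_{-\infty}^{\infty} d\omega\, g(\omega) e^{i\omega t} = 4 \sum_{n=0}^{\infty} \frac{1}{2n+1}\, e^{-(2n+1)\pi t/\beta}.
\end{equation}

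Setting $x = e^{-\pi t/\beta} \in (0,1)$, the series is $4\sum_{n\ge 0} x^{2n+1}/(2n+1) = 2\log\frac{1+x}{1-x}$ via the $\operatorname{arctanh}$ expansion, and the elementary identity $\frac{1+x}{1-x} = \coth(\pi t/2\beta)$ collapses this to $2\log\coth(\pi t/2\beta)$. Multiplying by the prefactor $\tfrac{1}{\beta\pi}$ and inserting $|t|$ by evenness yields the claimed value $\tfrac{2}{\beta\pi}\log\coth(\pi|t|/2\beta)$; for $t \neq 0$ the argument $\coth(\pi|t|/2\beta) > 1$, so the logarithm is positive and the absolute value is cosmetic.

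The main obstacle I anticipate is the rigorous control of the semicircular arcs: because the poles crowd the imaginary axis, the vanishing of the arc integral is not immediate from the $1/\omega$ decay alone, and the uniform boundedness of $\tanh(\beta\omega/2)$ on the chosen radii $R_N$ must be established carefully before Jordan's lemma can be invoked. A secondary technical point is justifying the interchange of the limit $N \to \infty$ with the residue sum and confirming that the resulting series converges to the stated closed form.
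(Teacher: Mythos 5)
Your proposal is correct and follows essentially the same route as the paper's proof: identify the simple poles of $\tanh(\beta\omega/2)/\omega$ at $\omega_n = i\pi(2n+1)/\beta$, close the contour in the upper half-plane for $t>0$, sum the residues into the series $4\sum_{n\ge 0} e^{-(2n+1)\pi t/\beta}/(2n+1) = 2\log\coth(\pi t/2\beta)$, and extend to all $t \neq 0$ by evenness of $f_\beta$. If anything, your treatment of the arcs is more careful than the paper's, which simply asserts the arc contribution vanishes as $R \to \infty$; your choice of radii $R_N = 2\pi N/\beta$ threading midway between the poles, where $\tanh(\beta\omega/2)$ is uniformly bounded, is the right way to make that step rigorous given that the poles accumulate along the imaginary axis.
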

\begin{proof}
Evaluating this integral can be reduced to a routine application of the residue calculus. Note that
\begin{equation}
    g(\omega) = \frac{1}{\omega} \tanh \left( \frac{\beta \omega}{2} \right) = \frac{\left( e^{\beta \omega / 2} - e^{- \beta \omega / 2} \right)}{\omega \left( e^{\beta \omega / 2} + e^{-\beta \omega / 2} \right)}
\end{equation}
is meromorphic, as it is the quotient of holomorphic functions. This function clearly has a removable singularity at $\omega = 0$, so we can extend it to a holomorphic function at $0$. This function will have poles precisely when
\begin{align}
   e^{\beta \omega / 2} + e^{-\beta \omega / 2} = 0 &\Longleftrightarrow e^{\beta \omega} = -1 \\ & \Longleftrightarrow \omega = p_k = \frac{i (2k + 1) \pi}{\beta} \ \ \ \text{for} \ k \in \mathbb{Z}.
\end{align}
Each of these poles are simple as
\begin{equation}
    \frac{d}{d\omega} \left( e^{\beta \omega / 2} + e^{-\beta \omega / 2} \right) \biggr\rvert_{\omega = p_k} = \frac{\beta}{2} \left( e^{(2 k + 1) \pi i / 2} - e^{-(2k + 1) \pi i / 2} \right) \neq 0.
\end{equation}
for each $k$. Thus, it follows immediately that
\begin{align}
    \text{Res} \left( g(\omega) e^{i \omega t}, p_k \right) &= \frac{\left( e^{\beta p_k / 2} - e^{- \beta p_k / 2} \right) e^{i p_k t}}{ \frac{p_k \beta}{2} \left( e^{\beta p_k / 2} - e^{-\beta p_k / 2} \right) + \left( e^{\beta p_k / 2} + e^{-\beta p_k / 2} \right)}
    \\ & = \frac{2 e^{i p_k t}}{\beta p_k} = -\frac{2 i}{(2k + 1) \pi} e^{- (2k + 1) t \pi / \beta}
\end{align}
From this result, note in particular that the partial sums
\begin{equation}
    S_n = 2\pi i \displaystyle\sum_{k = 0}^{n} \text{Res}\left( g(\omega) e^{i \omega t}, p_k \right) = \displaystyle\sum_{k = 0}^{n} \frac{4}{2k + 1} e^{-(2k + 1) t \pi / \beta}
\end{equation}
converge for $n \to \infty$ when $t > 0$: the sum is bounded by the sums over a decaying exponential.

With knowledge of the residues, we can carry out the integral. We choose as our integration contour the semi-circle lying in the upper-half plane and the $x$-axis, symmetric about the $y$-axis, with radius $R$, when $t > 0$. As $R \to \infty$, one can see that the portion of the integral on the circle's arc will go to $0$. The singularities of $g(\omega)$ contained inside the half-disk region enclosed by the semi-circle are $p_k$ for $k \geq 0$. Thus, since the partial sums $S_n$ converge, we find from residue theorem that for $t > 0$,
\begin{equation}
    \displaystyle\int_{-\infty}^{\infty} d\omega \frac{1}{\omega} \tanh\left( \frac{\beta \omega}{2} \right) e^{i \omega t} = \displaystyle\sum_{k = 0}^{\infty} \frac{4}{2k + 1} e^{-(2k + 1) t \pi / \beta} = 2 \log \left( \frac{1 + e^{-\pi t/\beta}}{1 - e^{-\pi t/\beta}} \right) = 2 \log \left( \coth \left( \frac{\pi t}{2\beta} \right) \right)
\end{equation}
Finally, note that $f_{\beta}(t)$ is even (this follows immediately from the definition). The desired result is an immediate consequence.
\end{proof}
\noindent From this result, it is then possible to conclude that
\begin{align}
    \displaystyle\int_{-\infty}^{\infty} |f_{\beta}(t)| = \frac{2}{\beta \pi} \displaystyle\int_{-\infty}^{\infty} dt \left| \log \left( \coth \left( \frac{\pi |t|}{2\beta} \right) \right) \right| = \frac{8}{\pi^2} \displaystyle\int_{0}^{\infty} \log(\coth(t)) \ dt = 1.
\end{align}
This immediately implies, using Eq.~\eqref{eq:o_bnd}, that $||O|| \leq e^{\beta ||V||/2}$. Putting everything together,
\begin{equation}
\label{eq:bnd_first}
    \left|\left| \text{Tr}_{v^{*}}\left[ \frac{1}{\mathcal{Z}} e^{-\beta H} \right] - \frac{1}{\mathcal{Z}} O \left( e^{-\beta H_L} \odot \text{Tr}_{v^{*}}[e^{-\beta H_R}] \right)O^{\dagger} \right|\right|_1 \leq 2c' \beta ||V|| e^{\frac{(4 + c) \beta ||V||}{2}} e^{-\frac{c\ell}{1 + c\alpha\beta/\pi}}.
\end{equation}
We have accomplished the first step of our bounding procedure: we have managed to move the partial trace in front of the thermal state to a location ``within" the agglomeration of local circle products which make up the entire global state, at the cost of introducing same error \emph{which decays exponentially in $\ell$}.

\subsection{The second bound}
\label{app:2}

\noindent The second and final task is to demonstrate that $O$ is a sufficient approximation of the true operator $O_{\text{eff}}$ which induces the removed Hamiltonian terms on the buffer region, $V$, for the base state $e^{-\beta H_L} \odot \text{Tr}_{v^{*}}[e^{-\beta H_R}]$. To be more concrete, we know, under the assumption that $e^{-\beta H_L} \odot \text{Tr}_{v^{*}}[e^{-\beta H_R}]$ is non-singular, that $e^{-\beta H_L} \odot \text{Tr}_{v^{*}}[e^{-\beta H_R}] = e^{-\beta H_{\text{eff}}}$ for an effective Hamiltonian  $H_{\text{eff}}$. Via Hastings' belief-propagation, there exists operator $O_{\text{eff}}$ such that $O_{\text{eff}} e^{-\beta H_{\text{eff}}} O_{\text{eff}}^{\dagger} = e^{-\beta (H_{\text{eff}} + V)}$. We wish to determine the closeness of $O$ and $O_{\text{eff}}$. The effective Hamiltonian $H_{\text{eff}}$ can be written as
\begin{align}
    H_{\text{eff}} &= H_{L} - \frac{1}{\beta} \log \text{Tr}_{v^{*}}\left[e^{-\beta H_R} \right] = H' + \left( -\frac{1}{\beta} \log \text{Tr}_{v^{*}}\left[e^{-\beta H_R}\right] - H_R \right)
\end{align}
Clearly, the second term is simply the thermal potential of $H_R$ effectuated by $v^{*}$, as was introduced in Def.~\ref{def:thermal_potential}, $V_{\text{thermal}}(H_R; v^{*}, \beta)$. It is at this point that we will require the thermal boundedness property, which we restate for convenience.

\thermalbnd*

\noindent We are now able to prove the desired bound on the distance between $O$ and $O_{\text{eff}}$, if we assume that the underlying Hamiltonian has this property. This will, however, require some auxiliary results, which we now present.

\begin{remark}[Long-time Lieb-Robinson bound]
\label{rem:lieb_robinson_piece}
When a local Hamiltonian $H$ supported on the vertices of a graph $G = (\mathcal{V}, \mathcal{E})$ with Hilbert space $\mathcal{H} = \bigotimes_{w \in \mathcal{V}} \mathcal{H}^{w}$ is said to satisfy a \emph{Lieb-Robinson bound}, it means that there exist constants $C, a, v \in \mathbb{R}^{+}$ such that for normalized observables $A$ and $B$ with supports in $\mathcal{H}$ separated by distance $\ell$,
\begin{equation}
    ||[A(t), B]|| \coloneqq || [e^{i H t} A e^{- i H t}, B] || \leq C e^{-a\ell} e^{av|t|}
\end{equation}
for all $t \in \mathbb{R}$. Fixing $\ell$, as $|t| \to \infty$, the Lieb-Robinson bound also approaches infinity. However, note that
\begin{equation}
    || [e^{i H t} A e^{- i H t}, B] || = || || e^{i H t} A e^{- i H t} B - B e^{i H t} A e^{- i H t} || \leq 2||A||||B|| = 2,
\end{equation}
a constant bound. Clearly, for $|t|$ sufficiently large, this naive, constant bound will be an improvement over the Lieb-Robinson bound. Thus, when using a Lieb-Robinson bound for long-time dynamics, it is important to tighten it, via the constant bound, as
\begin{equation}
    ||[A(t), B]|| \leq \min \left\{ C e^{-a\ell} e^{av|t|}, 2 \right\}.
\end{equation}
\end{remark}

\noindent This next result effectively will demonstrate that local Hamiltonian perturbations induce local effects on time-evolution of local observables. Suppose we are given an observable $O$ which is supported on a particular local region, as well as some Hamiltonian under which we time-evolve the system. Suppose $V$ is a Hamiltonian perturbation which has ``exponentially decaying influence" away from another region, which is separated from the support of $O$ by some distance $\ell$. We demonstrate that for sufficiently short times, evolving $O$ under $H + V$ yields very similar dynamics to evolution of $O$ under the bare Hamiltonian $H$. In particular, we have the following:

\begin{lemma}[Localization of time-evolution]
\label{lem:local_time}
    Let $H$ be a local Hamiltonian, $H = \sum_{e \in \mathcal{E}} h_{e}$ satisfying a Lieb-Robinson bound on the graph $G$, let $V$ be a perturbation having the property that $||V^{(j)}|| \leq K e^{-k j}$, for $K, k \in \mathbb{R}^{+}$, where cumulants $V^{(j)}$ are taken with respect to the region $\mathcal{V}' \subset \mathcal{V}$.
 
    Let $H' = H + V$. Let $O$ be a normalized observable supported on region $\mathcal{V}''$ such that $d(\mathcal{V}', \mathcal{V}'') = \ell$. Then,
    \begin{equation}
        || e^{i H' t} O e^{-i H' t} - e^{i H t} O e^{-i H t} || \leq \min \left\{ \left( |t| L + \frac{C K e^{avt}}{av} \right) e^{-a' \ell}, |t| L \right\}
    \end{equation}
    for fixed constants $a$, $v$, $C$, $a' = \min(k, a)$, and $L = 2K(1 - e^{-k})^{-1}$.
\end{lemma}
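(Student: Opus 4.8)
The plan is to reduce the difference of the two Heisenberg evolutions to a single time-integrated commutator via a standard interpolation (Duhamel) argument, and then to control that commutator using the cumulant decomposition of $V$ together with the refined Lieb--Robinson bound of Rem.~\ref{rem:lieb_robinson_piece}. Concretely, I would introduce the interpolating operator $g(s) = e^{i H' s}\, e^{i H (t-s)} O e^{-iH(t-s)}\, e^{-iH's}$ for $s \in [0,|t|]$, which equals $e^{iHt}Oe^{-iHt}$ at $s=0$ and $e^{iH't}Oe^{-iH't}$ at $s=t$. Differentiating, the $H$- and $H'$-generated terms cancel except for the perturbation, leaving $\tfrac{d}{ds} g(s) = i\, e^{iH's}\,[V,\, e^{iH(t-s)}Oe^{-iH(t-s)}]\, e^{-iH's}$. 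Integrating and using unitary invariance of the operator norm gives
\begin{equation}
  || e^{iH't}Oe^{-iH't} - e^{iHt}Oe^{-iHt} || \leq \int_0^{|t|} || [V,\, e^{iH\tau}Oe^{-iH\tau}] ||\, d\tau ,
\end{equation}
so the whole problem reduces to bounding the norm of the commutator of $V$ with the \emph{bare}-Hamiltonian evolution of $O$.

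Next I would insert the cumulant decomposition $V = \sum_{j\geq 1} V^{(j)}$ (Def.~\ref{def:cum}) and apply the triangle inequality over $j$. The essential geometric observation is that $V^{(j)}$ is supported within distance $j$ of $\mathcal{V}'$, while $O$ is supported on $\mathcal{V}''$ at distance $\ell$ from $\mathcal{V}'$; by the triangle inequality on the graph metric the supports of $V^{(j)}$ and $O$ are therefore separated by at least $\ell - j$. For each term I then apply the refined Lieb--Robinson estimate, $|| [V^{(j)},\, e^{iH\tau}Oe^{-iH\tau}] || \leq ||V^{(j)}||\, \min\{ C e^{-a(\ell - j)} e^{av|\tau|},\, 2 \}$, using $||O|| = 1$ together with the hypothesis $||V^{(j)}|| \leq K e^{-kj}$.

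The heart of the argument is the summation over $j$, which I would split at $j = \ell$. For $j \geq \ell$ the separation is non-positive, so I use the trivial commutator bound $2||V^{(j)}||$; the geometric tail $\sum_{j\geq\ell} 2K e^{-kj}$ sums to $L\, e^{-k\ell} \leq L\, e^{-a'\ell}$, and integrating this time-independent estimate over $[0,|t|]$ produces the $|t| L\, e^{-a'\ell}$ contribution. For $j < \ell$ I use the Lieb--Robinson factor, obtaining $CK e^{av|\tau|} e^{-a\ell}\sum_{j=1}^{\ell-1} e^{(a-k)j}$; multiplying the geometric sum by the prefactor $e^{-a\ell}$ collapses it to a constant multiple of $e^{-\min(a,k)\ell} = e^{-a'\ell}$ in both the $a<k$ and $a>k$ regimes. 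Integrating $\int_0^{|t|} CK e^{av\tau}\, d\tau \leq \tfrac{CK e^{av|t|}}{av}$ then yields the second term, and combining the two pieces gives the first entry of the minimum. The second entry, $|t|L$, follows instead by using the trivial commutator bound $||[V,\cdot]|| \leq \sum_j 2||V^{(j)}|| \leq L$ uniformly and integrating over time.

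I expect the main obstacle to be the bookkeeping in this split summation: verifying that the order-unity constant arising from the geometric series (which can be absorbed into $C$) combines with the prefactors to produce precisely the decay exponent $a' = \min(a,k)$. The delicate point is the borderline case $a = k$, where $\sum_{j=1}^{\ell-1} e^{(a-k)j}$ grows linearly in $\ell$; this is handled either by absorbing that factor or by replacing $a$ with $a-\epsilon$ in the Lieb--Robinson rate (still a valid bound) so that $a < k$ strictly. It is also worth emphasizing that the refined ``$\min$'' form of the Lieb--Robinson bound from Rem.~\ref{rem:lieb_robinson_piece} is genuinely needed to keep the large-$j$ (equivalently large-$\tau$) terms from blowing up, which is exactly where that remark does its work.
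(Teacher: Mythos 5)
Your proposal is correct, and its core coincides with the paper's: the cumulant decomposition $V = \sum_j V^{(j)}$, the support-separation observation $d(\mathrm{supp}(V^{(j)}), \mathcal{V}'') \geq \ell - j$, the refined ``min'' Lieb--Robinson estimate of Rem.~\ref{rem:lieb_robinson_piece} applied cumulant-by-cumulant, and the split of the $j$-sum at $j = \ell$ (trivial bound for the tail, Lieb--Robinson factor for the near terms) are exactly the steps in the paper. Where you genuinely diverge is the reduction step. You pass from $\| e^{iH't} O e^{-iH't} - e^{iHt} O e^{-iHt} \|$ to $\int_0^{|t|} \| [V, e^{iH\tau} O e^{-iH\tau}] \| \, d\tau$ in one stroke, via the interpolant $g(s) = e^{iH's} e^{iH(t-s)} O e^{-iH(t-s)} e^{-iH's}$ and a Duhamel derivative. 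The paper instead Trotterizes, $e^{iH't} = \lim_{n\to\infty} (e^{iVt/n} e^{iHt/n})^n$, controls each slice by expanding $e^{iVs}$ in a power series (which requires its auxiliary commutator bound, Lem.~\ref{lem:commutator_bound}), telescopes the slices via Lem.~\ref{lem:telescoping}, and only recovers the time integral in the $n \to \infty$ limit through an explicit calculus computation, along the way having to argue away a spurious factor $e^{|t|\|V\|/n}$. Your route is shorter, needs neither auxiliary lemma, and produces the identical final bound, since the paper's Riemann-sum limit evaluates to the same $\int_0^{|t|} CK e^{av\tau} d\tau \leq CK e^{av|t|}/(av)$ you compute directly. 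One further point in your favor: you are more careful than the paper about the geometric sum over $0 \leq j < \ell$. The paper bounds each near-term by $CK e^{-a'\ell} e^{av|t|}$ and then writes the whole sum with that same single-term bound, which strictly requires exactly the bookkeeping you spell out (summing $e^{-a\ell} e^{(a-k)j}$ and absorbing the resulting $O(1)$ constant into $C$), and at the borderline $a = k$ picks up a factor of $\ell$ unless one slightly relaxes the Lieb--Robinson rate, as you note. So your flagged ``obstacle'' is real, is silently present in the paper's own derivation, and your proposed fixes are the right ones.
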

\begin{proof}
    This lemma relies on use of the Lieb-Robinson bound that is assumed to hold, from which the constants $a$, $v$, and $C$ originate. In particular, for normalized observables $A$ and $B$ supported on subregions of $\mathcal{V}$ separated by distance $l$, we have from Rem.~\ref{rem:lieb_robinson_piece} that
    \begin{equation}
        || [e^{i H t} A e^{- i H t}, B] || \leq \min \left\{
            C e^{-al} e^{av|t|}, 2 \right\}
    \end{equation}
    Note that $V = \sum_{j = 0}^{\infty} V^{(j)}$, where $V^{(j)}$ is supported entirely on terms lying a distance of at most $j$ from $\mathcal{V}'$. Therefore, the distance between the support of $V^{(j)}$ and $\mathcal{V}''$ is at least $\ell - j$. It follows that
    \begin{align}
        \label{eq:bnd}
         || [e^{i H t} O e^{- i H t}, V^{(j)}] || = ||V^{(j)}|| \left|\left| \left[e^{i H t} O e^{- i H t}, \frac{V^{(j)}}{||V^{(j)}||} \right] \right|\right|  &\leq \min \left\{ C K e^{-kj} e^{-a(\ell - j)} e^{av|t|}, 2 K e^{-kj} \right\} \nonumber
         \\ & \leq  \min \left\{ C K e^{-a' \ell } e^{av|t|}, 2 K e^{-kj} \right\}
    \end{align}
    where $a' = \min(k, a)$. Note that of the two individual upper-bounds of which we take the minimum in the above inequality, one decays exponentially in $\ell$, but increases exponentially in $|t|$, while the other is constant in both parameters. It follows, similar to above, if we fix $\ell$ and allow $|t| \to \infty$, then the first bound will become arbitrarily large, whereas this will not be the case with the constant bound. Thus, to bound bound the commutator of the time-evolved $O$ with $V$, we construct an upper-bound depending on $t$ and $\ell$, as well as a constant upper-bound. In the first case, note that
    \begin{align}
        || [e^{i H t} O e^{- i H t}, V] || &\leq \displaystyle\sum_{j = 0}^{\ell - 1} || [e^{i H t} O e^{- i H t}, V^{(j)}] || + \displaystyle\sum_{j = \ell}^{\infty} || [e^{i H t} O e^{- i H t}, V^{(j)}] ||
        \\ & \leq K C e^{-a' \ell } e^{av|t|} + 2 K \displaystyle\sum_{j = \ell}^{\infty} e^{-kj}
        \\ & \leq K C e^{-a' \ell } e^{av|t|} + 2 K e^{-k\ell} \displaystyle\sum_{j = 0}^{\infty} e^{-k j}
        \\ & \leq K \left( C e^{av|t|} + \frac{2}{1 - e^{-k}} \right) e^{-a' \ell}
    \end{align}
    and in the latter case,
    \begin{equation}
        || [e^{i H t} O e^{- i H t}, V] || \leq 2K \displaystyle\sum_{j = 0}^{\infty} e^{-kj} = \frac{2K}{1 - e^{-k}}.
    \end{equation}
    Therefore,
    \begin{align}
        || [e^{i H t} O e^{- i H t}, V] || &\leq \min \left\{ K \left( C e^{av|t|} + \frac{2}{1 - e^{-k}} \right) e^{-a' \ell}, \frac{2K}{1 - e^{-k}}\right\}
        \\ & = \min \left\{ (C K e^{av|t|} + L) e^{-a'\ell}, L \right\}
    \end{align}
    where we have defined $L = 2K (1 - e^{-k})^{-1}$. From here,
        \begin{align}
        \label{eq:big_eq}
        || e^{i V s} e^{i H t} O e^{- i H t} e^{-i V s} -  e^{i H t} O e^{- i H t} || &= ||[ e^{i H t} O e^{- i H t}, e^{iVs}]|| = \left|\left| \displaystyle\sum_{n = 0}^{\infty} \frac{1}{n!} \left[ e^{i H t} O e^{- i H t}, \left( iVs \right)^n \right] \right|\right| \nonumber
        \\ & \leq \displaystyle\sum_{n = 0}^{\infty} \frac{|s|^n}{n!} || [e^{i H t} O e^{- i H t}, V^n] || \leq \displaystyle\sum_{n = 1}^{\infty} \frac{||V||^{n - 1} |s|^n}{(n - 1)!} ||[e^{i H t} O e^{- i H t}, V]|| \nonumber
        \\ & = |s| ||[e^{i H t} O e^{- i H t}, V]|| \displaystyle\sum_{n = 0}^{\infty} \frac{||V||^n |s|^n}{n!} = |s| ||[e^{i H t} O e^{- i H t}, V]|| e^{|s| ||V||} \nonumber
        \\ &\leq |s| \min \left\{ (C K e^{av|t|} + L) e^{-a' \ell}, L \right\} e^{|s|||V||}.
    \end{align}
    where we make use of Lem.~\ref{lem:commutator_bound} to bound the commutator with powers of $V$. Now, recall the first-order Trotter formula, which states that
    \begin{equation}
        e^{i H' t} = e^{i (H + V) t} = \lim_{n \to \infty} \left( e^{i V t / n} e^{i H t / n} \right)^{n} \coloneqq \lim_{n \to \infty} U_n(t; H, V).
    \end{equation}
    We inductively bound the error accumulated via Trotter decomposition. In particular, applying Lem.~\ref{lem:telescoping}, as well as Eq.~\ref{eq:big_eq}, we have
    \begin{align}
        \left|\left| U_n(t; H, V) O U_n(t; H, V)^{\dagger} - e^{i H t} O e^{-i H t} \right|\right| &\leq \displaystyle\sum_{j = 1}^{n} \left|\left| e^{i H j t / n} O e^{-i H j t / n} - e^{i V t / n}  e^{i H j t / n} O e^{-i H j t / n} e^{-i V t / n} \right|\right| \nonumber
        \\ & \leq \displaystyle\sum_{j = 1}^{n} \frac{|t|}{n} \min \left\{ (C K e^{avj|t|/n} + L) e^{-a'\ell}, L \right\} e^{|t| ||V||/n} \nonumber
        \\ & = |t| e^{|t| ||V||/n} \min \left\{ \left( L + \frac{C K}{n} \displaystyle\sum_{j = 1}^{n} e^{avj|t|/n} \right) e^{-a' \ell}, L \right\} \nonumber
        \\ & \leq |t| e^{|t| ||V||/n} \min \left\{ \left( L + \frac{C K}{n} \frac{e^{av|t|}}{e^{av|t|/n} - 1} \right) e^{-a' \ell}, L \right\}
    \end{align}
    Letting $f(x) = \frac{e^{q/x}}{x(e^{r/x} - 1)}$, note that
   \begin{align}
       \lim_{x \to \infty} f(x) = \lim_{x \to \infty} \frac{e^{q/x}}{x(e^{r/x} - 1)} = \lim_{x \to 0} \frac{x e^{qx}}{e^{rx} - 1} = \lim_{x \to 0} \frac{e^{qx} + q x e^{qx}}{r e^{rx}} = \frac{1}{r}
   \end{align}
   which finally implies that
   \begin{align}
       || e^{i H' t} O e^{-i H' t} - e^{i H t} O e^{-i H t} || &= \lim_{n \to \infty} \left|\left| U_n(t; H, V) O U_n(t; H, V)^{\dagger} - e^{i H t} O e^{-i H t} \right|\right|
       \\ & \leq \min \left\{ \left( |t| L + \frac{C K e^{av|t|}}{av} \right) e^{-a' \ell}, |t| L \right\}
   \end{align}
   This completes the proof.
\end{proof}
\noindent From here, we return to bounding the difference between $O$ and $O_{\text{eff}}$. Indeed, note from Lem.~\ref{lem:exp_bnd} that
\begin{align}
\label{eq:bnd2}
    ||O - O_{\text{eff}}|| &= \left|\left| \exp \left[ -\frac{\beta}{2} \displaystyle\int_{0}^{1} ds' \Phi^{H'(s')}_{\beta}(V) \right] - \exp \left[ -\frac{\beta}{2} \displaystyle\int_{0}^{1} ds' \Phi^{H_{\text{eff}}(s')}_{\beta}(V) \right] \right|\right| \nonumber
    \\ & \leq \frac{\beta e^{M}}{2} \left|\left| \displaystyle\int_{0}^{1} ds' \Phi^{H'(s')}_{\beta}(V) - \displaystyle\int_{0}^{1} ds' \Phi^{H_{\text{eff}}(s')}_{\beta}(V) \right|\right| \leq \frac{\beta e^{M}}{2} \displaystyle\int_{0}^{1} ds' \left|\left| \Phi^{H'(s')}_{\beta}(V) - \Phi^{H_{\text{eff}}(s')}_{\beta}(V) \right|\right| \nonumber
    \\ & \leq \frac{\beta e^{M}}{2} \displaystyle\int_{0}^{1} ds' \displaystyle\int_{-\infty}^{\infty} dt |f_{\beta}(t)| \left|\left| e^{- i H'(s) t} V e^{i H'(s) t} - e^{- i H_{\text{eff}}(s) t} V e^{i H_{\text{eff}}(s) t} \right|\right|,
\end{align}
with
\begin{equation}
    M = \frac{\beta}{2} \max \left\{ \left| \left| \displaystyle\int_{0}^{1} ds' \Phi^{H'(s')}_{\beta}(V) \right|\right|, \left| \left| \displaystyle\int_{0}^{1} ds' \Phi^{H_{\text{eff}}(s')}_{\beta}(V) \right|\right| \right\} \leq \frac{\beta ||V||}{2},
\end{equation}
where this upper-bound is computed using the bound derived earlier on the integral over $|f_{\beta}(t)|$. Now, recall that $H_{\text{eff}} = H' + V_{\text{thermal}}(H_R; v^{*}, \beta)$. We assumed that the Hamiltonian $H$ is thermally bounded, $H_R$ is a sum of a subset of the terms of $H$ and $v^{*}$ is a subset of $\mathcal{V}$. Thus, $||V^{(j)}_{\text{thermal}}(H_R; v^{*}, \beta)|| \leq K(\beta) e^{-k(\beta) j}$. We have
\begin{equation}
    H'(s) = H' + sV \ \ \ \text{and} \ \ \ H_{\text{eff}}(s) = H_{\text{eff}} + sV = H'(s) + V_{\text{thermal}}(H_R; v^{*}, \beta).
\end{equation}
Thus, we can invoke Lem.~\ref{lem:local_time} to get
\begin{align}
   \left|\left| e^{- i H'(s) t} V e^{i H'(s) t} - e^{- i H_{\text{eff}}(s) t} V e^{i H_{\text{eff}}(s) t} \right|\right| &\leq \min \left\{ \left( |t| L(\beta) + \frac{C K(\beta) e^{av|t|}}{av} \right) e^{-a'(\beta) \ell}, |t| L(\beta) \right\}
   \\ & \coloneqq \min \left\{ \left( |t| L(\beta) + M(\beta) e^{av|t|} \right) e^{-a'(\beta) \ell}, |t| L(\beta) \right\}
\end{align}
where we introduce $\beta$-dependence on all constants which themselves depend on $\beta$ via $k(\beta)$ and $K(\beta)$, and maximize over $s \in [0, 1]$ to get universal constants in the bound, rather than ones which depend on $s$. Note that
\begin{equation}
    k(\beta) \in O(\beta^{-1}) \Longrightarrow \frac{1}{1 - e^{-k(\beta)}} \in O(k(\beta)^{-1}) = O(\beta)
\end{equation}
which means that $L(\beta), M(\beta) \in O(\exp(\beta))$. Immediately, from Eq.~\ref{eq:bnd2},
\begin{equation}
    \label{eq:tough_bound}
    ||O - O_{\text{eff}}|| \leq \frac{\beta e^{\beta ||V||/2}}{2} \displaystyle\int_{-\infty}^{\infty} dt |f_{\beta}(t)| \min \left\{ \left( |t| L(\beta) + M(\beta) e^{av|t|} \right) e^{-a'(\beta) \ell}, |t| L(\beta) \right\},
\end{equation}
where we eliminate the integration over $s'$ by maximizing over $s' \in [0, 1]$. We now can use the previously derived formula for $|f_{\beta}(t)|$ to get
\begin{multline}
    \label{eq:bnd3}
    \displaystyle\int_{-\infty}^{\infty} dt |f_{\beta}(t)| \min \left\{ \left( |t| L(\beta) + M(\beta) e^{av|t|} \right) e^{-a'(\beta) \ell}, |t| L(\beta) \right\} \\ = \frac{2}{\beta \pi} \displaystyle\int_{-\infty}^{\infty} dt \left| \log \left( \coth \left( \frac{\pi |t|}{2\beta} \right) \right) \right| \min \left\{ \left( |t| L(\beta) + M(\beta) e^{av|t|} \right) e^{-a'(\beta) \ell}, |t| L(\beta) \right\}
    \\ = \frac{4}{\beta \pi} \displaystyle\int_{0}^{\infty} dt \log \left( \coth \left( \frac{\pi t}{2\beta} \right) \right) \min \left\{ \left( L(\beta) t + M(\beta) e^{av|t|} \right) e^{-a'(\beta) \ell}, L(\beta) t \right\}.
\end{multline}
First, it is straightforward to verify that
\begin{equation}
    \displaystyle\int_{0}^{\infty} dt \log \left( \coth \left( \frac{\pi t}{2\beta} \right) \right) t = \frac{4 \beta^2}{\pi^2} \displaystyle\int_{0}^{\infty} dt \log(\coth(t)) t = \frac{7 \zeta(3) \beta^2}{4 \pi^2} < \frac{9\beta^2}{4 \pi^2}.
\end{equation}
Now, let us consider the term in the integral which grows exponentially in $t$. Generally speaking, $\log(\coth(x))$ will go to $0$ as $x \to \infty$ exponentially quickly. However, $e^{x}$ will blow up to infinity at the same rate, so in the case that $av$ is too large, the integral of $f_{\beta}(t)$ against $e^{avt}$ over $t \in \mathbb{R}^{+}$ will not converge. However, when $t$ becomes large enough, it is clear that the function $L(\beta) t$ growing \emph{linearly} in time rather than exponentially will be the least element chosen by the minimum, inside the integral. Thus, the integral over the minimum of Eq.~\eqref{eq:bnd3} will always converge. Despite this, to upper-bound this integral by a quantity which decays exponentially in $\ell$, as we desire, it is necessary to partition this integral in a particular fashion, such that only the tail is integrated against the linear function.

To be more specific, for $x \in [1, \infty)$,
\begin{align}
    \log(\coth(x)) = \log \left( \frac{e^x + e^{-x}}{e^x - e^{-x}} \right) &= \log(e^{x} + e^{-x}) - \log(e^{x} - e^{-x}) = \log(1 + e^{-2x}) - \log(1 - e^{-2x})
    \\ & \leq 2 \log \left(1 - e^{-2x} \right) \leq 4 e^{-2x}
\end{align}
Therefore, it follows that for some $D$, we have
\begin{align}
    I_1(D) \coloneqq \displaystyle\int_{0}^{D} dt \log \left( \coth \left( \frac{\pi t}{2\beta} \right) \right) e^{avt} & = \frac{2 \beta}{\pi} \displaystyle\int_{0}^{\pi D/2\beta} dt \ \log(\coth(t)) e^{2\beta a v t / \pi} \nonumber
    \\ & = \frac{2\beta}{\pi} \displaystyle\int_{0}^{1} dt \ \log(\coth(t)) e^{2\beta a v t / \pi} + \frac{2\beta}{\pi} \displaystyle\int_{1}^{\pi D / 2\beta} dt \ \log(\coth(t)) e^{2\beta a v t / \pi} \nonumber
    \\ & < \frac{8 \beta}{\pi} e^{2\beta av / \pi} + \frac{8\beta}{\pi} \displaystyle\int_{1}^{\pi D/2\beta} dt \ e^{-2t} e^{2\beta a v t / \pi} \nonumber
    \\ & \leq \frac{8 \beta}{\pi} e^{2\beta av / \pi} + \frac{8\beta}{\pi} \left( \displaystyle\int_{0}^{\pi D / 2\beta} e^{4 \beta av t / \pi} \right)^{1/2} \left( \displaystyle\int_{0}^{\pi D / 2\beta} e^{-4t} \right)^{1/2}
    \\ & \leq \frac{8\beta}{\pi} \left[ e^{2\beta av/\pi} + \frac{1}{4} \sqrt{\frac{\pi}{\beta av}} e^{avD} \right]
\end{align}
which is exponentially increasing in $D$. We use the integral Cauchy-Schwarz inequality on the second-to-last line above. Note that this is not the best possible bound we can achieve: if $av\beta$ is small, then we can achieve a constant bound, but we're only interested in the worst case, for now. We also define,
\begin{equation}
    I_2(D) \coloneqq \displaystyle\int_{D}^{\infty} dt \log \left( \coth \left( \frac{\pi t}{2\beta} \right) \right) t \leq 4 \displaystyle\int_{D}^{\infty} dt e^{-\pi t/\beta} t = 4\left(\frac{\beta D}{\pi} + \frac{\beta^2}{\pi^2} \right) e^{-\pi D/\beta}.
\end{equation}
which is (asymptotically) exponentially decreasing in $D$. Note that the integral of Eq.~\ref{eq:bnd3} can be written as $I = I_1(D) + I_2(D)$, for \emph{any} $D \in [1, \infty)$. It is now our task to choose $D$ such that we may bound the integral $I$ with a quantity which exhibits exponential decay in $\ell$. 

In order to do this, we set $D = r \ell$. First, for any choice of $r \in \mathbb{R}^{+}$, we will have
\begin{align}
\label{eq:bnd4}
    I &= \displaystyle\int_{0}^{\infty} dt \log \left( \coth \left( \frac{\pi t}{2\beta} \right) \right) \min \left\{ \left( L(\beta) t + M(\beta) e^{avt} \right) e^{-a'(\beta) \ell},  L(\beta)t \right\} \nonumber
    \\ & = L(\beta) e^{-a'(\beta) \ell} \displaystyle\int_{0}^{\infty} dt \ \log \left( \coth \left( \frac{\pi t}{2\beta} \right) \right) t + \displaystyle\int_{0}^{\infty} dt \log \left( \coth \left( \frac{\pi t}{2\beta} \right) \right) \min \left\{ M(\beta) e^{-a'(\beta) \ell} e^{avt}, (1 - e^{-a'(\beta) \ell} ) L(\beta) t \right\} \nonumber \\
    & < \frac{9 L(\beta) \beta^2 e^{-a'(\beta) \ell}}{4\pi^2} + \displaystyle\int_{0}^{\infty} dt \log \left( \coth \left( \frac{\pi t}{2\beta} \right) \right) \min \left\{ M(\beta) e^{-a'(\beta) \ell} e^{avt}, (1 - e^{-a'(\beta) \ell} ) L(\beta) t \right\} \nonumber
    \\ & \leq 9 L(\beta) \beta^2 e^{-a'(\beta) \ell} + M(\beta) e^{-a'(\beta)\ell} I_1(r\ell) + L(\beta) I_2(r\ell).
\end{align}
Of course, it is possible to choose an optimal value of $r$, in terms of the other constants involved in the final expression of Eq.~\ref{eq:bnd4}. However, to keep things simple, we will simply choose $r$ such that $av r < a'(\beta)$, which will ensure the desired exponential decay. In particular, we set $r = a'(\beta) / 2av$. It follows that
\begin{equation}
    9 L(\beta) \beta^2 e^{-a'(\beta) \ell} + M(\beta) e^{-a'(\beta) \ell} I_1(r\ell)\leq \left[ 9L(\beta) \beta^2 + \frac{8\beta}{\pi}  e^{2\beta av/\pi} + 2\sqrt{\frac{\beta}{\pi av}} \right] e^{-a'(\beta) \ell / 2} \coloneqq \widetilde{M}(\beta) e^{-a'(\beta)\ell/2}
\end{equation}
and
\begin{equation}
    L(\beta) I_2(r\ell) \leq 4 L(\beta) \left( \frac{\beta a'(\beta) \ell}{2av} + \frac{\beta^2}{\pi^2} \right) e^{-\pi a'(\beta) \ell / 2av \beta} \coloneqq (\ell \widetilde{L}_1(\beta) + \widetilde{L}_2(\beta)) e^{-\pi a'(\beta) \ell / 2av \beta}.
\end{equation}
Thus, putting everything together,
\begin{equation}
    I \leq \widetilde{M}(\beta) e^{-a'(\beta) \ell / 2} + (\ell \widetilde{L}_1(\beta) + \widetilde{L}_2(\beta)) e^{-\pi a'(\beta) \ell / 2av \beta} \leq (\ell F(\beta) + G(\beta)) e^{-f(\beta) \ell}
\end{equation}
where $f(\beta) = \min \{ 1/2, \pi/2\beta av\}$, and $F$ and $G$ are functions of $\beta$ (as well as $C$, $a$, $v$, implicitly). It follows immediately from Eq.~\eqref{eq:tough_bound} that
\begin{equation}
\label{eq:bnd_second}
    ||O - O_{\text{eff}}|| \leq \frac{\beta e^{\beta ||V||/2}}{2} (\ell F(\beta) + G(\beta)) e^{-f(\beta) \ell}.
\end{equation}
Therefore, as desired, we have demonstrated that the operator $O$ and the operator $O_{\text{eff}}$ can be upper-bounded by a quantity which decays exponentially in $\ell$, the size of the chosen ``sliding-window".

\subsection{Completing the proof: combining the bounds}

\noindent With the bounds of Sec.~\ref{app:1} and Sec.~\ref{app:2}, we can complete the proof of Thm.~\ref{thm:error_single_step}. Because we now know the difference between $O$ and $O_{\text{eff}}$, we are able to bound the following error:
\begin{multline}
\label{eq:final_f}
\left|\left|\frac{1}{\mathcal{Z}} O \left( e^{-\beta H_L} \odot \text{Tr}_{v^{*}}[e^{-\beta H_R}] \right)O^{\dagger} - \frac{1}{\mathcal{Z}} \left( e^{-\beta (H_L + H_B)} \odot \text{Tr}_{v^{*}}[e^{-\beta H_R}] \right) \right|\right|_1 \\
    = \left|\left|\frac{1}{\mathcal{Z}} O \left( e^{-\beta H_L} \odot \text{Tr}_{v^{*}}[e^{-\beta H_R}] \right)O^{\dagger} - \frac{1}{\mathcal{Z}} O_{\text{eff}} \left( e^{-\beta H_L} \odot \text{Tr}_{v^{*}}[e^{-\beta H_R}] \right)O_{\text{eff}}^{\dagger} \right|\right|_1
    \\ \leq \left|\left| \frac{e^{-\beta H_L} \odot \text{Tr}_{v^{*}}[e^{-\beta H_R}]}{\mathcal{Z}} \right|\right|_1 (||O||_2 + ||O_{\text{eff}}||_2)(||O - O_{\text{eff}} ||_2) 
    \\ \leq \frac{\beta}{2} e^{\beta ||V||} (\ell F(\beta) + G(\beta)) \left|\left| \frac{e^{-\beta H_L} \odot \text{Tr}_{v^{*}}[e^{-\beta H_R}]}{\mathcal{Z}} \right|\right|_1 e^{-f(\beta) \ell}.
\end{multline}
where we once again use Holder's inequality, as we did in Sec.~\ref{app:1}, as well as the main bound derived in Sec.~\ref{app:2}, summarized in Eq.~\ref{eq:bnd_second}. Now, note that
\begin{align}
    \left|\left| \frac{e^{-\beta H_L} \odot \text{Tr}_{v^{*}}[e^{-\beta H_R}]}{\mathcal{Z}} \right|\right|_1 = \frac{\text{Tr} \left[ e^{-\beta (H_L + H_R + V - V)} \right]}{\text{Tr} \left[ e^{-\beta (H_L + H_R + V)} \right]} \leq \frac{\left|\left| e^{-\beta (H_L + H_R + V)} e^{\beta V} \right| \right|_1}{\left|\left| e^{-\beta (H_L + H_R + V)} \right|\right|_1} \leq e^{\beta ||V||}
\end{align}
where we again use Golden-Thompson and Holder's inequalities. To conclude, we use the main bound derived in Sec.~\ref{app:1}, summarized in Eq.~\ref{eq:bnd_first}, and a triangle inequality with Eq.~\eqref{eq:final_f}, to get
\begin{multline}
    \left|\left| \text{Tr}_{v^{*}}\left[ \frac{1}{\mathcal{Z}} e^{-\beta H} \right] - \frac{1}{\mathcal{Z}} \left( e^{-\beta (H_L + H_B)} \odot \text{Tr}_{v^{*}}[e^{-\beta H_R}] \right) \right|\right|_1
    \leq \left|\left| \text{Tr}_{v^{*}}\left[ \frac{1}{\mathcal{Z}} e^{-\beta H} \right] - \frac{1}{\mathcal{Z}} O \left( e^{-\beta H_L} \odot \text{Tr}_{v^{*}}[e^{-\beta H_R}] \right)O^{\dagger} \right|\right|_1 \\ + \left|\left|\frac{1}{\mathcal{Z}} O \left( e^{-\beta H_L} \odot \text{Tr}_{v^{*}}[e^{-\beta H_R}] \right)O^{\dagger} - \frac{1}{\mathcal{Z}} \left( e^{-\beta (H_L + H_B)} \odot \text{Tr}_{v^{*}}[e^{-\beta H_R}] \right) \right|\right|_1
    \\ \leq 2 \beta c||V|| e^{\frac{(4 + c) \beta ||V||}{2}} e^{-\frac{c\ell}{1 + c\alpha\beta/\pi}} + \frac{\beta}{2} e^{2 ||V||\beta} (\ell F(\beta) + G(\beta)) e^{-f(\beta) \ell}
    \\ \leq (\ell \widetilde{F}(\beta, V) + \widetilde{G}(\beta, V)) e^{-\widetilde{f}(\beta) \ell}.
\end{multline}
where we have defined functions $\widetilde{F}$, $\widetilde{G}$ and $\widetilde{f}$ upper-bounding the sum, all of which are computable from the constants involved in the previous expressions. Note that if we retrace all of the constants (in particular, the expressions involving $F(\beta)$ and $G(\beta)$), we can observe that $\widetilde{F}, \widetilde{G} \in \mathcal{O}(\exp(||V|| \beta)) = \mathcal{O}(\exp(||H_B|| \beta))$ and $\widetilde{f} \in \mathcal{O}(\beta^{-1})$. This completes the proof.

\section{The quantum Markov property, message-passing, and the circle product}
\label{appx:markov}

\noindent Here, we present a collection of technical results and elaboration on the Markov property and message-passing, which are invoked throughout the paper.

\subsection{Markov quantum graphical models}

\noindent The following theorem is the main tool in the proof of convergence for the original quantum message-passing algorithms, in Ref.~\cite{leifer2008quantum}.

\begin{theorem}[Decomposition Theorem, Ref.~\cite{hayden2004structure}]
\label{thm:decomp}
Let $\mathcal{H} = \mathcal{H}^{A} \otimes \mathcal{H}^{B} \otimes \mathcal{H}^{C}$ be a quantum system. A state $\rho_{ABC} \in \mathcal{H}$ satisfies $S(A : C | B) = 0$ if and only if there exists a decomposition of $\mathcal{H}^{B}$ as
\begin{equation}
\mathcal{H}^{B} = \displaystyle\bigoplus_{j} \mathcal{H}^{B_{L_j}} \otimes \mathcal{H}^{B_{R_j}}
\end{equation}
such that
\begin{equation}
\label{eq:decomp}
\rho_{ABC} = \bigoplus_{j} q_j \rho_{A B_{L_j}} \otimes \rho_{B_{R_j} C}
\end{equation}
where $\rho_{A B_{L_j}} \in \mathcal{H}^{A} \otimes \mathcal{H}^{B_{L_j}}$, $\rho_{B_{R_j} C} \in \mathcal{H}^{B_{R_j}} \otimes \mathcal{H}^{C}$, and $\sum_{j} q_j = 1$.
\end{theorem}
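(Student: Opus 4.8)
The statement is the Hayden--Jozsa--Petz--Winter structure theorem for saturation of strong subadditivity, and the plan is to prove the two implications by quite different means: the reverse implication is a direct entropy computation, while the forward implication requires recasting the hypothesis as an equality case in a data-processing inequality and then extracting an algebraic decomposition of $\mathcal{H}^B$.

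For the reverse implication, I would substitute the assumed form $\rho_{ABC} = \bigoplus_j q_j\, \rho_{A B_{L_j}} \otimes \rho_{B_{R_j} C}$ directly into the definition of $S(A:C|B)$. Because the summands live on mutually orthogonal subspaces of $\mathcal{H}^B$, each marginal inherits the same block-diagonal form, e.g. $\rho_{AB} = \bigoplus_j q_j\, \rho_{A B_{L_j}} \otimes \rho_{B_{R_j}}$ with $\rho_{B_{R_j}} = \text{Tr}_C \rho_{B_{R_j}C}$, and similarly for $\rho_{BC}$ and $\rho_B$. The von Neumann entropy of a block-diagonal operator equals the classical mixing entropy $-\sum_j q_j \log q_j$ plus $\sum_j q_j$ times the entropy of the normalized block, and within each block the tensor-product structure makes that entropy additive. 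Feeding these expansions into $S(\rho_{AB}) + S(\rho_{BC}) - S(\rho_{ABC}) - S(\rho_B)$, the mixing terms and all per-block contributions cancel term by term, yielding $0$.

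For the forward implication, the first step is to write the conditional mutual information as a gap in the monotonicity of relative entropy. Expanding relative entropies in terms of von Neumann entropies gives the identity
\[
S(A:C|B) = D\!\left(\rho_{ABC} \,\big\|\, I_A \otimes \rho_{BC}\right) - D\!\left(\rho_{AB} \,\big\|\, I_A \otimes \rho_B\right),
\]
and since the partial trace $\text{Tr}_C$ carries the first pair of arguments to the second, monotonicity of relative entropy under $\text{Tr}_C$ both reproves $S(A:C|B) \geq 0$ and shows that the hypothesis $S(A:C|B) = 0$ is exactly saturation of data processing. I would then invoke the equality condition for this inequality: saturation forces the Petz recovery map of $\text{Tr}_C$ with respect to $I_A \otimes \rho_{BC}$ to undo the trace on $\rho_{ABC}$, giving
\[
\rho_{ABC} = \rho_{BC}^{1/2}\, \rho_B^{-1/2}\, \rho_{AB}\, \rho_B^{-1/2}\, \rho_{BC}^{1/2}
\]
(operators padded with identities on the spectator factors, with inverses taken on supports), equivalently the operator-logarithm identity $\log \rho_{ABC} = \log \rho_{AB} + \log \rho_{BC} - \log \rho_B$.

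The final step, which I expect to be the main obstacle, is to convert this identity into the direct-sum-of-tensor-products decomposition. The idea is to introduce the conditional operators $\Delta = \log\rho_{AB} - \log\rho_B$, supported on $AB$, and $\Lambda = \log\rho_{BC} - \log\rho_B$, supported on $BC$, so that the identity reads $\log\rho_{ABC} = \log\rho_B + \Delta + \Lambda$. Using the recovery identity (or the equivalent commuting-modular-flow relation $\rho_{AB}^{it}\rho_B^{-it} = \rho_{ABC}^{it}\rho_{BC}^{-it}$), one shows that $\Delta$ and $\Lambda$ commute. Since $\Delta$ acts only on $AB$ and $\Lambda$ only on $BC$, their commutation constrains the shared factor $\mathcal{H}^B$: applying the structure theorem for finite-dimensional von Neumann algebras to the commuting pair of algebras generated on $\mathcal{H}^B$ yields the central decomposition $\mathcal{H}^B = \bigoplus_j \mathcal{H}^{B_{L_j}} \otimes \mathcal{H}^{B_{R_j}}$, in which $\Delta$ is supported on $A \otimes B_{L_j}$ and $\Lambda$ on $B_{R_j} \otimes C$ within each sector. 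Exponentiating $\log\rho_B + \Delta + \Lambda$ block by block then produces $\rho_{ABC} = \bigoplus_j q_j\, \rho_{A B_{L_j}} \otimes \rho_{B_{R_j} C}$, with $q_j$ the total weight of sector $j$. The genuinely hard part is this last algebraic extraction --- establishing the commutation of $\Delta$ and $\Lambda$ and then tracking supports carefully through the von Neumann algebra structure theorem --- rather than the relative-entropy manipulations, which are essentially bookkeeping.
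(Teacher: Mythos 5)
The paper contains no proof of this theorem for you to be compared against: it is stated as an imported result of Ref.~\cite{hayden2004structure} and used as a black box (its only role is in the proof of Lem.~\ref{lem:mar_tr} and in justifying the convergence claim of Thm.~\ref{thm:qmp}). Judged on its own merits, your proposal is a faithful reconstruction of the actual Hayden--Jozsa--Petz--Winter argument. The reverse direction is routine and correct as you describe: all four marginals inherit the block structure, $S(\bigoplus_j q_j \sigma_j) = H(\{q_j\}) + \sum_j q_j S(\sigma_j)$, and additivity of entropy over the tensor factors within each block makes the combination $S(\rho_{AB}) + S(\rho_{BC}) - S(\rho_{ABC}) - S(\rho_B)$ cancel term by term. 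The forward direction also follows the genuine route: writing $S(A:C|B) = D(\rho_{ABC}\,\|\,I_A\otimes\rho_{BC}) - D(\rho_{AB}\,\|\,I_A\otimes\rho_B)$ exhibits the hypothesis as saturation of data processing under $\mathrm{Tr}_C$, and Petz's equality condition yields the recovery identity you state.

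The one step to treat with more care is the commutation of $\Delta = \log\rho_{AB} - \log\rho_B$ and $\Lambda = \log\rho_{BC} - \log\rho_B$. This does \emph{not} follow in any direct way from the single operator identity $\log\rho_{ABC} = \log\rho_B + \Delta + \Lambda$ --- a sum identity by itself places no constraint on commutators. What the literature actually exploits is the one-parameter (modular) form of Petz's condition, $\rho_{ABC}^{it}\rho_{BC}^{-it} = \rho_{AB}^{it}\rho_B^{-it}$ for all real $t$, from which one shows that the two operator algebras induced on $\mathcal{H}^{B}$ (generated by the $B$-parts of the two flows) commute; only then does the structure theorem for finite-dimensional $*$-algebras produce $\mathcal{H}^{B} = \bigoplus_j \mathcal{H}^{B_{L_j}} \otimes \mathcal{H}^{B_{R_j}}$ with the stated supports in each sector. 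You mention this modular relation parenthetically as an alternative, but in a complete write-up it is the primary mechanism, not an aside. The same goes for the support caveats on $\rho_B^{-1/2}$ and $\rho_{BC}^{1/2}$, which must be tracked to make the sector weights $q_j$ and the normalized blocks $\rho_{AB_{L_j}}$, $\rho_{B_{R_j}C}$ well defined. With those points filled in, your plan is sound, and your identification of the algebraic extraction as the hard part accurately reflects where the work lies in Ref.~\cite{hayden2004structure}.
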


\noindent We use this result to prove a small technical lemma, which is used in Sec.~\ref{sec:background} to conclude that taking partial traces preserves the Markov property.

\begin{lemma}
  \label{lem:mar_tr}
  Suppose $(G, \rho_V)$ is a bifactor network with $G$ a tree, posessing the Markov property. Then, the partial trace over a leaf node of the graphical model preserves the Markov property.
\end{lemma}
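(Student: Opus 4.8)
The plan is to verify the local Markov condition of Def.~\ref{def:q_markov} directly for the traced state on the smaller tree, reducing every conditional mutual information that appears back to one of $\rho_{\mathcal{V}}$ itself. Write $w$ for the leaf being traced out and $p$ for its unique neighbour, and set $\rho' = \mathrm{Tr}_w[\rho_{\mathcal{V}}]$ on $G' = (\mathcal{V}', \mathcal{E}')$ with $\mathcal{V}' = \mathcal{V} \setminus \{w\}$; note that $G'$ is again a tree. The only ingredients I intend to use are (i) the hypothesis that $S_{\rho_{\mathcal{V}}}(\mathcal{W} : \mathcal{V} \setminus (n(\mathcal{W}) \cup \mathcal{W}) \mid n(\mathcal{W}) \setminus \mathcal{W}) = 0$ for \emph{every} $\mathcal{W} \subseteq \mathcal{V}$, (ii) non-negativity and monotonicity of conditional mutual information (both consequences of strong subadditivity, the latter in the form $S(X : YZ \mid W) \geq S(X : Y \mid W)$), and (iii) the elementary fact that for any $\mathcal{S} \subseteq \mathcal{V}'$ the reduced operator of $\rho'$ on $\mathcal{S}$ coincides with that of $\rho_{\mathcal{V}}$. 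Because of (iii), each CMI I must check for $\rho'$ is numerically equal to the corresponding CMI of $\rho_{\mathcal{V}}$, so the whole problem becomes one of matching up graph regions.

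Fix $\mathcal{U} \subseteq \mathcal{V}'$; I must show $S_{\rho'}(\mathcal{U} : \mathcal{V}' \setminus (n'(\mathcal{U}) \cup \mathcal{U}) \mid n'(\mathcal{U}) \setminus \mathcal{U}) = 0$, where $n'$ denotes neighbourhoods in $G'$. Since $w$ is adjacent only to $p$, deleting it changes neighbourhoods in the simplest possible way: one always has $n'(\mathcal{U}) = n(\mathcal{U}) \setminus \{w\}$, and $w \in n(\mathcal{U})$ precisely when $p \in \mathcal{U}$. This dichotomy drives the two cases.

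If $p \notin \mathcal{U}$, then $w \notin n(\mathcal{U})$, so $n'(\mathcal{U}) = n(\mathcal{U})$ and $w$ falls into the exterior region: writing $C = \mathcal{V} \setminus (n(\mathcal{U}) \cup \mathcal{U})$ we have $w \in C$ and the target region is $C \setminus \{w\}$. The CMI in question equals $S_{\rho_{\mathcal{V}}}(\mathcal{U} : C \setminus \{w\} \mid n(\mathcal{U}) \setminus \mathcal{U})$, which by monotonicity is at most $S_{\rho_{\mathcal{V}}}(\mathcal{U} : C \mid n(\mathcal{U}) \setminus \mathcal{U}) = 0$ (the hypothesis for $\mathcal{U}$); since CMI is non-negative it vanishes. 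The case $p \in \mathcal{U}$ is the crux: now $w$ lies in the \emph{conditioning} set $B := n(\mathcal{U}) \setminus \mathcal{U}$, and one cannot in general discard a conditioning system without destroying a vanishing CMI. The device I would use is to reinstate $w$ into the region and apply the hypothesis to $\mathcal{W} = \mathcal{U} \cup \{w\}$. A short neighbourhood computation—using that $w$'s only neighbour $p$ already lies in $\mathcal{U}$—shows $n(\mathcal{W}) \setminus \mathcal{W} = B \setminus \{w\} =: B_0$ and $\mathcal{V} \setminus (n(\mathcal{W}) \cup \mathcal{W}) = \mathcal{V} \setminus (n(\mathcal{U}) \cup \mathcal{U}) = C$, exactly the conditioning and exterior regions demanded by $G'$. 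Hence the hypothesis gives $S_{\rho_{\mathcal{V}}}(\mathcal{U} \cup \{w\} : C \mid B_0) = 0$, and discarding $w$ from the first argument by monotonicity (and symmetry of CMI in its first two arguments) yields $S_{\rho_{\mathcal{V}}}(\mathcal{U} : C \mid B_0) = 0$, the required identity for $\rho'$.

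The main obstacle is exactly this second case, where the traced node sits inside the Markov blanket rather than the exterior; the naive attempt—keeping $\mathcal{U}$ fixed and simply deleting $w$ from the conditioning register—fails because conditional independence is not preserved under partial trace of a conditioning system. What rescues it is purely graph-theoretic: because $w$ is a leaf hanging off a node already in $\mathcal{U}$, enlarging the region to $\mathcal{U} \cup \{w\}$ leaves the exterior untouched and shrinks the blanket by exactly $w$, so the enlarged region's Markov condition is precisely the statement we need, up to a harmless discard. I would also remark that the argument never uses the bifactor form of $\rho_{\mathcal{V}}$ beyond the Markov property itself, and that the decomposition theorem (Thm.~\ref{thm:decomp}) is not strictly required here, since monotonicity of CMI already does the work.
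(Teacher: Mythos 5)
Your proposal is correct, but it takes a genuinely different route from the paper's proof. The paper verifies the Markov condition for $\mathrm{Tr}_w[\rho_{\mathcal{V}}]$ by invoking the structure theorem (Thm.~\ref{thm:decomp}): for each region it writes $\rho_{\mathcal{V}}$ in the direct-sum-of-tensor-products form guaranteed by vanishing CMI, observes that tracing out the leaf $w$ preserves that form (grouping $w$ with the exterior factor when $p \notin \mathcal{U}$, and enlarging the region to $\mathcal{U} \cup \{w\}$ when $p \in \mathcal{U}$ --- exactly your case split), and then applies the converse direction of the theorem to conclude the traced model is Markov. You bypass the structure theorem entirely and argue at the level of entropies, using only non-negativity of CMI and its monotonicity under discarding a subsystem from a non-conditioning argument (both consequences of strong subadditivity), together with the same graph-theoretic pivot: when the traced leaf lands in the conditioning set, reinstate it into the region, which leaves the exterior untouched and removes $w$ from the blanket. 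Your neighbourhood computations check out --- $n(\mathcal{W}) \setminus \mathcal{W} = B \setminus \{w\}$ and $\mathcal{V} \setminus (n(\mathcal{W}) \cup \mathcal{W}) = C$ do hold because $w$'s unique neighbour $p$ already lies in $\mathcal{U}$ --- and your dichotomy on $p \in \mathcal{U}$ silently absorbs the paper's third case ($p$ in the blanket $W$), since only membership of $p$ in $\mathcal{U}$ determines where $w$ sits. As for what each approach buys: the paper's route exhibits the explicit factorized structure of the traced state, which is the form actually exploited by the belief-propagation machinery (commuting partial traces through circle products), so proving the lemma via Thm.~\ref{thm:decomp} keeps the argument aligned with how the lemma is used; your route is more elementary, needing only strong subadditivity, and is more robust --- because it rests solely on monotonicity of CMI, it degrades gracefully to approximate Markovianity (a state with small CMI retains small CMI after tracing the leaf), whereas the exact decomposition of Thm.~\ref{thm:decomp} is notoriously unstable under perturbation. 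Your closing remark that the bifactor structure is never used is also accurate: the paper's proof likewise uses only the Markov property and leaf-ness of $w$.
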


\begin{proof}
  Suppose $w$ is a leaf node of $G$. Let $p(w)$ be the unique parent of $w$. Let us pick some $U \subset V - w$. Let $W = n(U) - U$ and $X = V - (n(U) \cup U)$, where $n(U)$ are the neighbours of $U$ in $V - w$.
  In the case that $p(w) \in X$, note that for the triple of sets $U$, $W$, and $X \cup \{w\}$, $W$ consists of all neighbours of $U$ in $V$ and $X \cup \{w\}$ consists of all
  non-neighbours, as $w$ only neighbours $p(w)$, and is hence a non-neighbour itself. We have, from Thm.~\ref{thm:decomp}:
  \begin{equation}
    \text{Tr}_{w} \left[ \rho_V \right] = \text{Tr}_w \left[ \displaystyle\sum_{j} q_{j} \sigma_{U W_{L_j}} \otimes \sigma_{W_{R_j} (X \cup \{w\})} \right] = \displaystyle\sum_{j} q_{j} \sigma_{U W_{L_j}} \otimes \sigma_{W_{R_j} X}
  \end{equation}
  It follows that in this case, we can write $\text{Tr}_w \left[ \rho_V \right]$ in the form of Eq.\eqref{eq:decomp}. In the case that $p(w) \in U$, we repeat the same procedure with $U \cup \{w\}$ in the place of $U$. If $p(w) \in W$,
  we can put $w$ in either $U$ or $X$, and repeat the same procedure as when $p(w) \in X$. Thus, Thm.~\ref{thm:decomp} implies that the model is Markov.
\end{proof}
\noindent The reason why this proof doesn't work if $w$ isn't a leaf node is because we could choose a triple $X, W, U$ where $w$ has neighbours in both $U$ and $X$, so we wouldn't be able to ``place" $w$ in either $U$ or $X$ without contradicting the conditions on these sets that we require for the model to be Markov.

\subsection{Calculation of circle products}

\noindent Computation of the circle product (Def.~\ref{def:circle_prod}) is the most important subrotuine in any protocol which executes the quantum message-passing algorithm. As a result, in order to understand how errors accumulate during belief-propagation, it is necessary not only to consider errors due to non-Markovity (in the sliding-window case), but also numerical errors which are introduced when manipulating large matrices. As it turns out, so long as errors in computation of effective Hamiltonians at each step of the message-passing algorithm are controlled, then error accumulates linearly. In particular, we have the following result:
\begin{lemma}
    Suppose $||H_A' - H_A|| \leq \epsilon_A$ and $||H_B' - H_B|| \leq \epsilon_B$. Then
    \begin{equation}
        \left| \left| \frac{1}{\mathcal{Z}_{AB}} \exp(H_A) \odot \exp(H_B) - \frac{1}{\mathcal{Z}_{A'B'}} \exp(H_A) \odot \exp(H_B') \right| \right| \leq 2(\epsilon_A + \epsilon_B)
    \end{equation}
\end{lemma}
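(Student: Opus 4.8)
The plan is to first collapse the circle products into ordinary matrix exponentials, reducing the claim to a stability estimate for normalized Gibbs states. By Def.~\ref{def:circle_prod} we have $\exp(H_A) \odot \exp(H_B) = \exp(H_A + H_B)$, and similarly for the primed operators, so setting $K = H_A + H_B$ and letting $K'$ denote the analogous sum of the (possibly primed) Hamiltonians, the left-hand side becomes $|| \rho - \rho' ||$ with $\rho = e^{K}/\mathcal{Z}_{AB}$, $\rho' = e^{K'}/\mathcal{Z}_{A'B'}$, $\mathcal{Z}_{AB} = \text{Tr}[e^{K}]$, and $\mathcal{Z}_{A'B'} = \text{Tr}[e^{K'}]$. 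The triangle inequality immediately yields $|| K - K' || \leq || H_A - H_A' || + || H_B - H_B' || \leq \epsilon_A + \epsilon_B =: \epsilon$, so it suffices to prove that the normalized Gibbs map $K \mapsto e^{K}/\text{Tr}[e^{K}]$ is $2$-Lipschitz in operator norm.

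To establish this I would interpolate. Put $\Delta = K' - K$ and $K(s) = K + s\Delta$ for $s \in [0,1]$, and let $\rho(s) = e^{K(s)}/Z(s)$ with $Z(s) = \text{Tr}[e^{K(s)}]$; each $K(s)$ is Hermitian, so $\rho(s)$ is a genuine density operator interpolating $\rho(0) = \rho$ and $\rho(1) = \rho'$. Then $|| \rho - \rho' || \leq \int_0^1 || \tfrac{d}{ds}\rho(s) || \, ds$, and it remains to bound the integrand uniformly by $2\epsilon$. Differentiating, and using Duhamel's formula for $\tfrac{d}{ds} e^{K(s)}$ together with $\tfrac{d}{ds} Z(s) = \text{Tr}[e^{K(s)} \Delta]$, gives
\begin{equation}
\frac{d}{ds}\rho(s) = \frac{1}{Z(s)} \int_0^1 e^{u K(s)} \Delta\, e^{(1-u) K(s)} \, du - \rho(s)\, \text{Tr}[\rho(s) \Delta].
\end{equation}

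I would then bound the two terms separately. For the Duhamel term, submultiplicativity gives $|| e^{u K(s)} \Delta\, e^{(1-u)K(s)} || \leq e^{u \lambda_{\max}} \epsilon\, e^{(1-u)\lambda_{\max}} = \epsilon\, e^{\lambda_{\max}}$, where $\lambda_{\max}$ is the top eigenvalue of $K(s)$; the crucial cancellation is that $Z(s) = \text{Tr}[e^{K(s)}] \geq e^{\lambda_{\max}}$, so this term is bounded by $\epsilon$ independently of the spectral width of $K(s)$. For the second term, $|| \rho(s)\, \text{Tr}[\rho(s)\Delta] || \leq || \rho(s) ||\, |\text{Tr}[\rho(s)\Delta]| \leq 1 \cdot (|| \rho(s) ||_1\, || \Delta ||) = \epsilon$, using that a density operator has operator norm at most $1$, unit trace norm, and Hölder duality. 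Summing gives $|| \tfrac{d}{ds}\rho(s) || \leq 2\epsilon$ uniformly in $s$, and integrating over $s \in [0,1]$ delivers $|| \rho - \rho' || \leq 2\epsilon = 2(\epsilon_A + \epsilon_B)$.

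The step carrying the real content is the bound on the Duhamel term. A naive estimate of $|| e^{K} - e^{K'} ||$ scales like $e^{\lambda_{\max}}$ and can be enormous, so attacking the raw difference in operator norm looks hopeless. The resolution is that normalization divides by $Z \geq e^{\lambda_{\max}}$, exactly canceling the exponential blow-up; exposing this cancellation through the integral representation of the \emph{derivative} of the normalized state, rather than through the difference $e^{K} - e^{K'}$ itself, is precisely what produces a bound depending only on $|| \Delta ||$, and hence on $\epsilon_A + \epsilon_B$.
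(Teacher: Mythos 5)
Your proof is correct and takes essentially the same route as the paper's: a linear interpolation of the Hamiltonian, Duhamel's formula for the derivative of the normalized Gibbs state, and the key cancellation $\mathcal{Z}(s) \geq e^{\lambda_{\max}}$ (equivalently, $||\rho(s)|| \leq 1$) to bound the derivative uniformly by twice the perturbation norm. The only cosmetic difference is that you perturb $H_A + H_B \to H_A' + H_B'$ in one interpolation using $||\Delta|| \leq \epsilon_A + \epsilon_B$, whereas the paper interpolates one Hamiltonian at a time and combines the two resulting bounds via the triangle inequality.
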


\begin{proof}
    For some matrix function $A(s)$ defined for $s \in [0, 1]$ and some fixed $B$, define
    \begin{equation}
        F(s) = \frac{\exp(A(s) + B)}{\text{Tr} \left[ \exp(A(s) + B) \right]} = \frac{\exp(A(s) + B)}{\mathcal{Z}(s)}
    \end{equation}
    and note that
    \begin{equation}
    || F(1) - F(0) || = \left| \left| \displaystyle\int_{0}^{1} \frac{d F(s)}{ds} \ ds \right| \right| \leq \displaystyle\int_{0}^{1} \left| \left| \frac{d F(s)}{ds} \right| \right| ds
    \end{equation}
    where we can take the derivative via product rule
    \begin{equation}
    \label{eq:df}
        \frac{d F(s)}{ds} = \frac{1}{\mathcal{Z}(s)} \frac{d}{ds} \exp(A(s) + B) - \frac{\exp(A(s) + B)}{\mathcal{Z}(s)^2} \frac{d}{ds} \mathcal{Z}(s) = \frac{1}{\mathcal{Z}(s)} \left( \frac{d}{ds} \exp(A(s) + B) - F(s) \frac{d}{ds} \mathcal{Z}(s) \right)
    \end{equation}
    and the derivative of the exponential follows from Duhamel's formula:
    \begin{equation}
      \frac{d}{ds} \exp(A(s) + B) = \displaystyle\int_{0}^{1} e^{(1 - t) [A(s) + B]} A'(s) e^{t [A(s) + B]} dt
    \end{equation}
    It follows from the cyclic property of the trace that
    \begin{align}
        \frac{d}{ds} \mathcal{Z}(s) = \mathrm{Tr} \left[ \frac{d}{ds} \exp(A(s) + B) \right] &= \displaystyle\int_{0}^{1} \mathrm{Tr} \left[ e^{(1 - t) [A(s) + B]} A'(s) e^{t [A(s) + B]} \right] dt \\ &= \displaystyle\int_{0}^{1} \mathrm{Tr} \left[ e^{A(s) + B} A'(s) \right] dt = \mathrm{Tr} \left[ e^{A(s) + B} A'(s) \right]
    \end{align}
    which implies, from Eq.~\eqref{eq:df}, that
    \begin{align}
        \left| \left| \frac{d F(s)}{ds} \right| \right| &= \left| \left| \frac{1}{\mathcal{Z}(s)} \displaystyle\int_{0}^{1} e^{(1 - t) [A(s) + B]} A'(s) e^{t [A(s) + B]} dt - \frac{1}{\mathcal{Z}(s)} \text{Tr} \left[ e^{A(s) + B} A'(s) \right] F(s) \right| \right|
        \\ & \leq \left| \left| \frac{1}{\mathcal{Z}(s)} \displaystyle\int_{0}^{1} e^{(1 - t) [A(s) + B]} A'(s) e^{t [A(s) + B]} dt \right| \right| + \left| \left| \frac{1}{\mathcal{Z}(s)} \mathrm{Tr} \left[ e^{A(s) + B} A'(s) \right] F(s) \right| \right|
        \\ & \leq \displaystyle\int_{0}^{1} \left| \left| \left( \frac{e^{A(s) + B}}{\mathcal{Z}(s)} \right)^{1 - t} A'(s) \left( \frac{e^{A(s) + B}}{\mathcal{Z}(s)} \right)^{t} \right| \right| dt + \left| \mathrm{Tr} \left[ \frac{e^{A(s) + B}}{\mathcal{Z}(s)} A'(s) \right] \right|
        \\ & \leq \displaystyle\int_{0}^{1} \left| \left| \frac{e^{A(s) + B}}{\mathcal{Z}(s)} \right| \right|^{1 - t} ||A'(s)|| \left| \left| \frac{e^{A(s) + B}}{\mathcal{Z}(s)} \right| \right|^{t} dt + \left| \left| A'(s) \right| \right| \leq 2 ||A'(s)||
    \end{align}
    and, immediately, $||F(1) - F(0)|| \leq 2 ||A'(s)||$. Now, we can prove the desired result. Note by the triangle inequality that
    \begin{align}
        \left| \left| \frac{1}{\mathcal{Z}_{AB}} \exp(H_A + H_B) - \frac{1}{\mathcal{Z}_{A'B'}} \exp(H_A' + H_B') \right| \right| & \leq \left| \left| \frac{1}{\mathcal{Z}_{AB}} \exp(H_A + H_B) - \frac{1}{\mathcal{Z}_{A'B}} \exp(H_A' + H_B) \right| \right|
        \\ & + \left| \left| \frac{1}{\mathcal{Z}_{A'B}} \exp(H'_A + H_B) - \frac{1}{\mathcal{Z}_{A'B'}} \exp(H_A' + H_B') \right| \right|
    \end{align}
    It follows from setting $A(s) = (1 - s) H_A + s H_A'$ and $B = H_B$, so $A(0) = H_A$, $H(1) = H_A'$, and $A'(s) = H_A' - H_A$, we have
    \begin{equation}
        \left| \left| \frac{1}{\mathcal{Z}_{AB}} \exp(H_A + H_B) - \frac{1}{\mathcal{Z}_{A'B}} \exp(H_A' + H_B) \right| \right| \leq 2 ||H_A' - H_A|| \leq 2 \epsilon_A
    \end{equation}
    Identical reasoning shows that the latter difference is bounded by $2 \epsilon_B$, which implies that
    \begin{equation}
        \left| \left| \frac{1}{\mathcal{Z}_{AB}} \exp(H_A + H_B) - \frac{1}{\mathcal{Z}_{A'B'}} \exp(H_A' + H_B') \right| \right| \leq 2(\epsilon_A + \epsilon_B)
    \end{equation}
    and the proof is complete.
\end{proof}

\noindent When calculating the circle products via completely numerical means, it is necessary to compute matrix logarithms and exponentials. Often, algorithms which carry out these operations will have some amount of associated error, which is dependent on the condition number of the matrices in which they are operating. In light of this fact, it is also worthwhile to consider a lower bound on the minimum eigenvalue of density operators which are constructed via circle products. First, recall two important theorems.

\begin{theorem}[Golden-Thompson]
\label{thm:golden_thompson}
For Hermitian operators $A$ and $B$, $\text{Tr}[e^{A + B}] \leq \text{Tr}[e^{A} e^{B}]$.
\end{theorem}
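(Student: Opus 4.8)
The plan is to prove Golden--Thompson by the classical route that combines the Lie--Trotter product formula with a trace-power inequality for positive semidefinite matrices. Since $A$ and $B$ are Hermitian, $e^A$ and $e^B$ are automatically positive definite, so for each $k$ I may set $P = e^{A/2^k}$ and $Q = e^{B/2^k}$, which are positive definite with $P^{2^k} = e^A$ and $Q^{2^k} = e^B$. The Lie--Trotter formula, restricted to the dyadic subsequence (which converges in operator norm since we work in finite dimension), gives $e^{A+B} = \lim_{k\to\infty}(e^{A/2^k} e^{B/2^k})^{2^k}$, and combined with continuity of the trace this yields $\mathrm{Tr}[e^{A+B}] = \lim_{k\to\infty} \mathrm{Tr}[(PQ)^{2^k}]$. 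The whole theorem then reduces to the single, uniform-in-$k$ inequality $\mathrm{Tr}[(PQ)^{2^k}] \le \mathrm{Tr}[P^{2^k} Q^{2^k}] = \mathrm{Tr}[e^A e^B]$, since letting $k \to \infty$ on the left-hand side immediately produces $\mathrm{Tr}[e^{A+B}] \le \mathrm{Tr}[e^A e^B]$.

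I would prove the base case $k=1$ of this inequality, namely $\mathrm{Tr}[(PQ)^2] \le \mathrm{Tr}[P^2 Q^2]$, directly from the observation that for Hermitian $P, Q$ the commutator $[P,Q] = PQ - QP$ is anti-Hermitian, so $-[P,Q]^2 = [P,Q]^{\dagger}[P,Q]$ is positive semidefinite and has nonnegative trace. Expanding $[P,Q]^2$ and repeatedly invoking cyclicity of the trace collapses the four resulting terms into $\mathrm{Tr}[[P,Q]^2] = 2\,\mathrm{Tr}[(PQ)^2] - 2\,\mathrm{Tr}[P^2 Q^2]$, and the nonnegativity of $-\mathrm{Tr}[[P,Q]^2]$ is exactly the desired bound. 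It is worth noting that this step uses only Hermiticity, not positivity.

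The main obstacle is the inductive step that promotes the base case to arbitrary dyadic exponents, i.e. establishing $\mathrm{Tr}[(PQ)^{2^k}] \le \mathrm{Tr}[(P^2 Q^2)^{2^{k-1}}]$ and then telescoping (replacing $(P,Q,k)$ by $(P^2, Q^2, k-1)$ repeatedly down to the trivial exponent, which yields $\mathrm{Tr}[P^{2^k} Q^{2^k}]$). Unlike the base case this genuinely requires positivity of $P$ and $Q$ and cannot be obtained by naive regrouping, because $(PQ)^2$ is not positive. I would handle it with the standard eigenvalue-majorization machinery: using the similarity $PQ \sim P^{1/2} Q P^{1/2}$, so that $\mathrm{Tr}[(PQ)^m]$ is the trace of a power of a genuine positive operator, together with control of all elementary symmetric functions of its eigenvalues simultaneously via antisymmetric tensor powers (the Lieb--Thirring circle of ideas). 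Getting this majorization step fully rigorous, and separately verifying that the dyadic Trotter subsequence converges so that the uniform-in-$k$ bound passes cleanly to the limit, are the two places where care is genuinely needed; everything else is routine manipulation of traces and norms.
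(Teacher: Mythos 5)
The paper does not prove this statement at all: Golden--Thompson is recalled in Appendix~B as a standard background theorem (used, e.g., to bound $\mathcal{Z}'/\mathcal{Z}$ in the proof of Thm.~\ref{thm:error_single_step}), so there is no in-paper argument to compare yours against. On its own terms, your proposal is the correct classical proof, and the structure is sound in every part: the dyadic Lie--Trotter limit $e^{A+B} = \lim_k (e^{A/2^k}e^{B/2^k})^{2^k}$ is valid in finite dimension, the commutator identity $\mathrm{Tr}\bigl[[P,Q]^2\bigr] = 2\,\mathrm{Tr}[(PQ)^2] - 2\,\mathrm{Tr}[P^2Q^2] \le 0$ for Hermitian $P,Q$ gives the base case exactly as you say, and the telescoping reduction $\mathrm{Tr}[(PQ)^{2^k}] \le \mathrm{Tr}[(P^2Q^2)^{2^{k-1}}] \le \cdots \le \mathrm{Tr}[P^{2^k}Q^{2^k}]$ is the right inductive skeleton. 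The one step you leave as machinery-to-be-invoked closes cleanly: for any matrix $X$, Weyl's majorant theorem (the $\log$-majorization of eigenvalue moduli by singular values, i.e.\ the antisymmetric-tensor-power argument you name) gives $|\mathrm{Tr}[X^{2m}]| \le \sum_i s_i(X)^{2m} = \mathrm{Tr}[(X^{\dagger}X)^m]$, and taking $X = PQ$ with $P,Q$ positive yields $\mathrm{Tr}[(PQ)^{2m}] \le \mathrm{Tr}[(QP^2Q)^m] = \mathrm{Tr}[(P^2Q^2)^m]$ by cyclicity of the trace --- which is precisely your inductive step, applicable at every level since $P^{2^j}, Q^{2^j}$ remain positive definite. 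Your observation that positivity is genuinely needed there (while the base case needs only Hermiticity) is accurate: $(PQ)^2$ is not positive, and naive regrouping fails. You do not need the full Lieb--Thirring inequality; the Weyl majorant bound above suffices, which slightly simplifies your outline.
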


\begin{theorem}[Weyl]
Let $\lambda_i(A)$ denote the $i$-th eigenvalue of operator $A$, in descending order. Let $\lambda_{\text{min}}(A)$ denote the minimum eigenvalue, and let $\lambda_{\text{max}}(A)$ denote the maximum. Given Hermitian operators $N$ and $R$, as well as $M = N + R$, then
\begin{equation}
    \lambda_{i}(N) + \lambda_{\text{min}}(R) \leq \lambda_{i}(M) \leq \lambda_{i}(N) + \lambda_{\text{max}}(R).
\end{equation}
\end{theorem}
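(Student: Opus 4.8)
The plan is to prove both inequalities from the Courant--Fischer variational (max-min) characterization of the eigenvalues of a Hermitian operator, together with the elementary Rayleigh-quotient bounds $\lambda_{\text{min}}(R) \le \langle x, R x \rangle \le \lambda_{\text{max}}(R)$ valid for every unit vector $x$. Recall that, because eigenvalues are indexed in \emph{descending} order as in the statement, the relevant representation of the $i$-th eigenvalue of a Hermitian $A$ on an $n$-dimensional space is
\begin{equation}
    \lambda_i(A) = \max_{\dim S = i} \ \min_{\substack{x \in S \\ \|x\| = 1}} \langle x, A x \rangle,
\end{equation}
where $S$ ranges over $i$-dimensional subspaces. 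I would either cite this as the standard min-max principle or, to keep the argument self-contained, derive it by diagonalizing $A$ and using a dimension count: any $i$-dimensional $S$ must intersect the span of the top $i$ eigenvectors, which bounds the inner minimum, while that span attains the value $\lambda_i(A)$.

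First I would establish the upper bound. Fixing an arbitrary $i$-dimensional subspace $S$, every unit $x \in S$ satisfies
\begin{equation}
    \langle x, M x \rangle = \langle x, N x \rangle + \langle x, R x \rangle \le \langle x, N x \rangle + \lambda_{\text{max}}(R).
\end{equation}
Taking the minimum over unit $x \in S$ and then the maximum over all such $S$, the additive constant $\lambda_{\text{max}}(R)$ passes through both optimizations unchanged (adding a constant shifts both $\min$ and $\max$ by that constant), giving
\begin{equation}
    \lambda_i(M) = \max_{\dim S = i} \min_{\substack{x \in S \\ \|x\|=1}} \langle x, M x\rangle \le \lambda_i(N) + \lambda_{\text{max}}(R).
\end{equation}

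For the lower bound I would avoid repeating the argument and instead exploit the symmetry of the statement under negation. Applying the upper bound already proven to the decomposition $N = M + (-R)$ yields $\lambda_i(N) \le \lambda_i(M) + \lambda_{\text{max}}(-R)$. Since $\lambda_{\text{max}}(-R) = -\lambda_{\text{min}}(R)$, rearranging produces $\lambda_i(N) + \lambda_{\text{min}}(R) \le \lambda_i(M)$, the left-hand inequality, completing the proof.

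There is no genuinely hard step here; this is a classical matrix-analysis result. The one point I would be most careful about is the ordering convention: with eigenvalues in descending order one must use the max-min (rather than min-max) form of Courant--Fischer, and the whole argument hinges on the observation that a uniform additive perturbation bound on the quadratic form commutes with the nested $\max$ and $\min$. Everything else is routine once the variational characterization is in hand.
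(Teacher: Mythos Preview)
Your proof is correct and is the standard Courant--Fischer argument for Weyl's inequality. However, the paper does not actually prove this theorem: it is stated in the appendix as a well-known result (alongside Golden--Thompson, also stated without proof) and then immediately invoked in the subsequent eigenvalue lower-bound lemma. So there is nothing to compare against; your proposal simply supplies a proof where the paper chose to cite the result.
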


\noindent This leads to the main result:

\begin{lemma}[Eigenvalue lower-bound for circle product]
If $A$ and $B$ are non-singular density operators, then the following inequality holds:
\begin{equation}
    \lambda_{\text{min}} \left( \frac{1}{\mathcal{Z}} (A \odot B) \right) \geq \frac{\lambda_{\text{min}}(A) \lambda_{\text{min}}(B)}{\lambda_{\text{max}}(A)},
\end{equation}
where $\mathcal{Z} = \text{Tr}[A \odot B]$.
\end{lemma}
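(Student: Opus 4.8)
The plan is to reduce everything to eigenvalue statements about the matrix $\log(A) + \log(B)$, exploiting the fact that the circle product is defined (Def.~\ref{def:circle_prod}) as $A \odot B = \exp(\log(A) + \log(B))$, together with the monotonicity of $\exp$ and $\log$ on the spectra of Hermitian operators. The first step is to observe that since $A \odot B$ is obtained by exponentiating a Hermitian operator, and $\exp$ is monotone increasing on eigenvalues, we have $\lambda_{\text{min}}(A \odot B) = \exp\!\big(\lambda_{\text{min}}(\log(A) + \log(B))\big)$. I would then apply Weyl's inequality with $N = \log(A)$, $R = \log(B)$, and $M = \log(A) + \log(B)$ to obtain $\lambda_{\text{min}}(\log(A) + \log(B)) \geq \lambda_{\text{min}}(\log(A)) + \lambda_{\text{min}}(\log(B))$. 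Because $\log$ is also spectrum-monotone, $\lambda_{\text{min}}(\log(A)) = \log(\lambda_{\text{min}}(A))$ and likewise for $B$, so exponentiating yields the numerator bound
\begin{equation}
\lambda_{\text{min}}(A \odot B) \geq \lambda_{\text{min}}(A)\,\lambda_{\text{min}}(B).
\end{equation}

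The second step is to produce a matching upper bound on the normalization $\mathcal{Z} = \text{Tr}[A \odot B]$, since the desired inequality has $\lambda_{\text{max}}(A)$ in its denominator. Here I would invoke the Golden–Thompson inequality (Thm.~\ref{thm:golden_thompson}) with the Hermitian operators $\log(A)$ and $\log(B)$, giving $\mathcal{Z} = \text{Tr}[\exp(\log(A) + \log(B))] \leq \text{Tr}[\exp(\log(A))\exp(\log(B))] = \text{Tr}[AB]$. To finish I would use positivity of $A$ and $B$ to write $\text{Tr}[AB] = \text{Tr}[A^{1/2} B A^{1/2}] \leq \lambda_{\text{max}}(A)\,\text{Tr}[B]$, and then use that $B$ is a density operator so $\text{Tr}[B] = 1$, yielding $\mathcal{Z} \leq \lambda_{\text{max}}(A)$.

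Combining the two bounds gives the result directly, since $\lambda_{\text{min}}$ scales linearly under division by the positive scalar $\mathcal{Z}$:
\begin{equation}
\lambda_{\text{min}}\!\left( \tfrac{1}{\mathcal{Z}}(A \odot B) \right) = \frac{\lambda_{\text{min}}(A \odot B)}{\mathcal{Z}} \geq \frac{\lambda_{\text{min}}(A)\,\lambda_{\text{min}}(B)}{\lambda_{\text{max}}(A)}.
\end{equation}
The proof is short and the individual ingredients are standard, so there is no deep obstacle; the one point demanding care is the direction of the inequalities. I must pair the \emph{lower} Weyl bound (which controls $\lambda_{\text{min}}$ of the sum of logarithms from below) with an \emph{upper} bound on $\mathcal{Z}$, and Golden–Thompson happens to point the right way only because it upper-bounds the trace of the exponential of a sum. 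The non-singularity hypothesis is what makes $\log(A)$ and $\log(B)$ well-defined Hermitian operators, so it should be flagged as the reason each step is legitimate.
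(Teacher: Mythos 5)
Your proof is correct and takes essentially the same route as the paper's: spectral monotonicity of $\exp$ and $\log$ combined with Weyl's inequality to get $\lambda_{\text{min}}(A \odot B) \geq \lambda_{\text{min}}(A)\lambda_{\text{min}}(B)$, then Golden--Thompson to bound $\mathcal{Z} \leq \mathrm{Tr}[AB] \leq \lambda_{\text{max}}(A)$. The only (immaterial) difference is the final step, where the paper invokes the von Neumann trace inequality to bound $\mathrm{Tr}[AB]$, while you use cyclicity and positivity via $\mathrm{Tr}[A^{1/2} B A^{1/2}] \leq \lambda_{\text{max}}(A)\,\mathrm{Tr}[B]$.
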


\begin{proof}
Let $\lambda_{\text{min}}(M)$ denote the minimum eigenvalue of $M$. Given unit trace operators $A$ and $B$ with eigenvalues in $(0, 1]$, it is clear that $\lambda_{\text{min}}(\log(A)) = \log(\lambda_{\text{min}}(A))$, with the same holding true for $\log(B)$, as $\log$ is monotone increasing on this interval. From Weyl's inequality,
\begin{equation}
    \log(\lambda_{\text{min}}(A)) + \log(\lambda_{\text{min}}(B)) \leq \lambda_{\text{min}}(\log(A) + \log(B))
\end{equation}
which immediately implies that
\begin{align}
    \lambda_{\text{min}}(A) \lambda_{\text{min}}(B) &= \exp \left( \log(\lambda_{\text{min}}(A)) + \log(\lambda_{\text{min}}(B)) \right) \\
    & \leq \exp (\lambda_{\text{min}}(\log(A) + \log(B)))
    \\ & = \lambda_{\text{min}}(\exp(\log(A) + \log(B))) = \lambda_{\text{min}}(A \odot B)
\end{align}
More generally, we let $\lambda_i(M)$ denote the $i$-th eigenvalue of $M$, where we arrange the eigenvalues in ascending order. We have
\begin{equation}
    \text{Tr} \left[ A \odot B \right] \leq \text{Tr} \left[ AB \right] \leq \displaystyle\sum_{i = 1}^{N} \lambda_i(A) \lambda_i(B) \leq \lambda_{\text{max}}(A) \left( \displaystyle\sum_{i = 1}^{N} \lambda_i(B) \right) = \lambda_{\text{max}}(A)
\end{equation}
where the first inequality follows from Golden-Thompson, the second from the von Neumann trace inequality (and the fact that $A$ and $B$ are positive). The final inequality also follows from the positivity of the eigenvalues of $A$ and $B$. It follows immediately that
\begin{equation}
    \lambda_{\text{min}} \left( \frac{1}{\mathcal{Z}} (A \odot B) \right) = \frac{1}{\text{Tr}[A \odot B]} \lambda_{\text{min}}(A \odot B) \geq \frac{\lambda_{\text{min}}(A) \lambda_{\text{min}}(B)}{\lambda_{\text{max}}(A)}
\end{equation}
and the proof is complete.
\end{proof}

\section{Miscellaneous results}
\label{appx:misc}

\noindent This section contains a collection of minor auxiliary lemmas which are utilized in prior sections of this work.

\begin{lemma}[Commutator bound]
\label{lem:commutator_bound}
    For operators $A$ and $B$, $||[A, B^n]|| \leq n ||B||^{n - 1} ||[A, B]||$.
\end{lemma}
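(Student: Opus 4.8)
The plan is to reduce the statement to a single algebraic identity that expresses the commutator $[A, B^n]$ as a sum of $n$ terms, each of which is easy to bound, and then apply submultiplicativity of the operator norm.

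First I would establish the Leibniz-type expansion
\begin{equation}
    [A, B^n] = \sum_{k=0}^{n-1} B^k [A, B] B^{n-1-k},
\end{equation}
which I would verify by induction on $n$. The base case $n = 1$ is immediate, and the inductive step follows by writing $[A, B^{n+1}] = [A, B^n] B + B^n [A, B]$ (an instance of the derivation-like product rule $[A, XY] = [A, X] Y + X [A, Y]$) and substituting the inductive hypothesis into the first term.

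Next I would take the operator norm of both sides, apply the triangle inequality, and use submultiplicativity $||XY|| \leq ||X|| \, ||Y||$ together with $||B^k|| \leq ||B||^k$ to bound each summand uniformly:
\begin{equation}
    || B^k [A,B] B^{n-1-k} || \leq ||B||^{k} \, ||[A,B]|| \, ||B||^{n-1-k} = ||B||^{n-1} \, ||[A,B]||.
\end{equation}
Since there are exactly $n$ terms in the sum, adding them yields $||[A, B^n]|| \leq n \, ||B||^{n-1} \, ||[A,B]||$, as desired.

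Alternatively, one could bypass the explicit identity and argue directly by induction on the inequality using $[A, B^{n}] = [A, B^{n-1}] B + B^{n-1} [A,B]$, which gives $||[A, B^n]|| \leq ||[A, B^{n-1}]|| \, ||B|| + ||B||^{n-1} ||[A,B]||$; feeding in the inductive bound $||[A, B^{n-1}]|| \leq (n-1) ||B||^{n-2} ||[A,B]||$ closes the induction. I do not anticipate any genuine obstacle: the only point requiring care is the correct bookkeeping of the product-rule expansion of the commutator and ensuring the term count is exactly $n$, but both routes are elementary and purely formal, relying solely on the algebra of commutators and the basic norm inequalities.
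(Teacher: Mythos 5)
Your proposal is correct, and your second route is exactly the paper's proof: an induction on $n$ using the product-rule split $[A,B^{n+1}] = [A,B]B^n + B[A,B^n]$ followed by the triangle inequality and submultiplicativity. Your primary route via the explicit identity $[A,B^n] = \sum_{k=0}^{n-1} B^k [A,B] B^{n-1-k}$ is just the unrolled form of the same induction, so there is nothing to flag.
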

\begin{proof}
    This is clearly true for $n = 1$. Let us assume the case of $n$, for the case of $n + 1$ we have
    \begin{align}
        ||[A, B^{n + 1}]|| &= || A B^{n + 1} - B^{n + 1} A || = || AB B^{n} - B A B^{n} + B A B^{n} - B B^{n} A||
        \\ & \leq ||[A, B]|| ||B||^{n} + ||B|| ||[A, B^{n}]|| \leq ||[A, B]|| ||B||^{n} + n ||B||^{n} ||[A, B]||
        \\ & = (n + 1) ||B||^{n} ||[A, B]||
    \end{align}
    which proves the statement.
\end{proof}

\begin{lemma}[Telescoping bound]
\label{lem:telescoping}
   Given unitary operators $U$, $V$, and Hermitian operator $O$, and positive integer $k$,
   \begin{equation}
       ||V^k O (V^{\dagger})^{k} - (UV)^k O (V^{\dagger} U^{\dagger})^k || \leq \displaystyle\sum_{j = 1}^{k} ||V^{j} O (V^{\dagger})^{j} - U V^{j} O (V^{\dagger})^{j} U^{\dagger}||.
   \end{equation}
\end{lemma}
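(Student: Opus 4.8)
The plan is to prove this by a standard telescoping decomposition, exploiting the fact that the operator norm is invariant under conjugation by a unitary, i.e.\ $||W A W^{\dagger}|| = ||A||$ whenever $W$ is unitary. This invariance is what lets us discard common unitary ``dressings'' at each stage of the telescope for free, so that each term in the sum reduces to a single insertion of $U$.

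First I would introduce the interpolating sequence, for $j = 0, 1, \dots, k$,
\begin{equation}
    X_j = (UV)^j \, V^{k-j} O (V^{\dagger})^{k-j} \, (V^{\dagger} U^{\dagger})^j,
\end{equation}
which satisfies $X_0 = V^k O (V^{\dagger})^k$ and $X_k = (UV)^k O (V^{\dagger} U^{\dagger})^k$, precisely the two operators whose difference is to be bounded. Writing $X_0 - X_k = \sum_{j=1}^{k} (X_{j-1} - X_j)$ and applying the triangle inequality gives $||X_0 - X_k|| \leq \sum_{j=1}^{k} ||X_{j-1} - X_j||$.

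Next I would evaluate each consecutive difference. Writing out $X_{j-1}$ and $X_j$, both carry the outer factors $(UV)^{j-1}$ on the left and $(V^{\dagger} U^{\dagger})^{j-1}$ on the right; since $(UV)^{j-1}$ is unitary with inverse $(V^{\dagger} U^{\dagger})^{j-1}$, conjugation by it preserves the norm, so
\begin{equation}
    ||X_{j-1} - X_j|| = \left|\left| V^{k-j+1} O (V^{\dagger})^{k-j+1} - (UV)\, V^{k-j} O (V^{\dagger})^{k-j} \,(V^{\dagger} U^{\dagger}) \right|\right|.
\end{equation}
Collapsing powers of $V$ via $(UV) V^{k-j} = U V^{k-j+1}$ and $(V^{\dagger})^{k-j}(V^{\dagger} U^{\dagger}) = (V^{\dagger})^{k-j+1} U^{\dagger}$, the second term equals $U\,[V^{k-j+1} O (V^{\dagger})^{k-j+1}]\,U^{\dagger}$, so setting $Y = V^{k-j+1} O (V^{\dagger})^{k-j+1}$ each consecutive difference is exactly $||Y - U Y U^{\dagger}||$.

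Finally I would reindex the sum: as $j$ runs from $1$ to $k$, the exponent $k-j+1$ runs over $k, k-1, \dots, 1$, so $\sum_{j=1}^{k} ||X_{j-1} - X_j||$ is identically $\sum_{m=1}^{k} ||V^{m} O (V^{\dagger})^{m} - U V^{m} O (V^{\dagger})^{m} U^{\dagger}||$, which is the claimed right-hand side. There is no genuine analytic obstacle here; the only point requiring care is the index bookkeeping --- choosing the telescope so that each step swaps a single $V$-block for a $UV$-block, and verifying that the residual term at each step really is a one-sided $U$-conjugation of a $V$-conjugate of $O$. The structural fact doing all the work is unitary invariance of the norm, which is what makes the outer $(UV)^{j-1}$ dressing vanish from the estimate at no cost.
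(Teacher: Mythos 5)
Your proof is correct and follows essentially the same route as the paper: the interpolating sequence $X_j = (UV)^j V^{k-j} O (V^{\dagger})^{k-j} (V^{\dagger}U^{\dagger})^j$ is exactly the telescoping decomposition the paper writes out, and both arguments then discard the outer unitary dressing by norm invariance and reindex. The only difference is cosmetic bookkeeping (you sum over $j = 1,\dots,k$ rather than $j = 0,\dots,k-1$).
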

\begin{proof}
    Note that
    \begin{align}
   V^k O (V^{\dagger})^{k} - (UV)^k O (V^{\dagger} U^{\dagger})^k &= \displaystyle\sum_{j = 0}^{k - 1}  (UV)^{j} V^{k - j} O (V^{k - j})^{\dagger} (V^{\dagger} U^{\dagger})^{j} - (UV)^{j + 1} V^{k - j - 1} O (V^{k - j - 1})^{\dagger} (V^{\dagger} U^{\dagger})^{j + 1} \nonumber
   \\ & = \displaystyle\sum_{j = 0}^{k - 1} (UV)^{j} \left( V^{k - j} O (V^{\dagger})^{k - j} - U V^{k - j} O (V^{\dagger})^{k - j} U^{\dagger} \right) (V^{\dagger} U^{\dagger})^{j}
   \end{align}
   Thus,
   \begin{align}
       ||V^k O (V^{\dagger})^{k} - (UV)^k O (V^{\dagger} U^{\dagger})^k || &\leq \displaystyle\sum_{j = 0}^{k - 1} || (UV)^{j} \left( V^{k - j} O (V^{\dagger})^{k - j} - U V^{k - j} O (V^{\dagger})^{k - j} U^{\dagger} \right) (V^{\dagger} U^{\dagger})^{j} ||
       \\ & = \displaystyle\sum_{j = 0}^{k - 1} || V^{k - j} O V^{k - j} - U V^{k - j} O (V^{\dagger})^{k - j} U^{\dagger} ||
       \\ & = \displaystyle\sum_{j = 1}^{k} ||V^{j} O (V^{\dagger})^{j} - U V^{j} O (V^{\dagger})^{j} U^{\dagger}||
   \end{align}
   and the proof is complete.
\end{proof}
\begin{lemma}[Exponential bound]
\label{lem:exp_bnd}
    Given operators $A$ and $B$ and $M = \max(||A||, ||B||)$, $||\exp(A) - \exp(B)|| \leq e^{M} ||A - B||$.
\end{lemma}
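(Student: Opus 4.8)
The plan is to bound the difference of exponentials by interpolating linearly between $A$ and $B$ and integrating the derivative of the exponential along this path. Concretely, I would set $X(s) = (1-s)A + sB$ for $s \in [0,1]$, so that $X(0) = A$ and $X(1) = B$, and define $f(s) = \exp(X(s))$. By the fundamental theorem of calculus, $\exp(B) - \exp(A) = f(1) - f(0) = \int_0^1 f'(s)\, ds$, so it suffices to bound $\|f'(s)\|$ uniformly over $s \in [0,1]$ and integrate.

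First I would compute $f'(s)$ using Duhamel's formula (the same identity already invoked earlier in the appendix), which gives
\[
f'(s) = \int_0^1 e^{(1-t)X(s)}\, X'(s)\, e^{t X(s)}\, dt,
\]
where $X'(s) = B - A$ is constant in $s$. Taking the operator norm and using submultiplicativity together with the triangle inequality inside the integral yields $\|f'(s)\| \le \int_0^1 \|e^{(1-t)X(s)}\|\, \|B - A\|\, \|e^{t X(s)}\|\, dt$.

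The key estimate is the bound $\|e^{rX}\| \le e^{r\|X\|}$ for $r \ge 0$, which follows directly from the power series $e^{rX} = \sum_{k} (rX)^k/k!$ and submultiplicativity of the operator norm, and crucially does not require $X$ to be Hermitian. Applying it to the two exponential factors gives $\|e^{(1-t)X(s)}\|\,\|e^{t X(s)}\| \le e^{\|X(s)\|}$, so that $\|f'(s)\| \le e^{\|X(s)\|}\, \|B - A\|$. Finally I would bound $\|X(s)\| = \|(1-s)A + sB\| \le (1-s)\|A\| + s\|B\| \le M$ by the triangle inequality and the definition of $M$, giving $\|f'(s)\| \le e^{M}\, \|B - A\|$ uniformly in $s$. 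Integrating over $s \in [0,1]$ then produces $\|\exp(B) - \exp(A)\| \le \int_0^1 \|f'(s)\|\, ds \le e^{M}\, \|B - A\|$, as desired.

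There is no serious obstacle here; the argument is routine. The only point requiring minor care is ensuring that the growth bound $\|e^{rX}\| \le e^{r\|X\|}$ is applied in its general (non-Hermitian) form, since the operators to which this lemma is ultimately applied arise as integrals of $\Phi_\beta$ and need not be assumed Hermitian a priori. The power-series argument covers this case without any additional hypotheses, so the interpolation bound goes through for arbitrary (bounded) operators $A$ and $B$.
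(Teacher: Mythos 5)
Your proof is correct, but it takes a genuinely different route from the paper's. The paper argues directly from the power series: it bounds $\|\exp(A)-\exp(B)\| \leq \sum_{j\geq 1}\frac{1}{j!}\|A^j-B^j\|$, applies the telescoping identity $A^j - B^j = \sum_{k=1}^{j} A^{j-k}(A-B)B^{k-1}$ to get $\|A^j - B^j\| \leq j M^{j-1}\|A-B\|$, and resums to $e^{M}\|A-B\|$. You instead interpolate along $X(s) = (1-s)A + sB$, differentiate via Duhamel's formula, and bound $\|e^{(1-t)X(s)}\|\,\|e^{tX(s)}\| \leq e^{\|X(s)\|} \leq e^{M}$ before integrating; every step checks out, including the convexity estimate $\|X(s)\| \leq (1-s)\|A\| + s\|B\| \leq M$. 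As for what each buys: the paper's version is purely algebraic and avoids any appeal to differentiability of $s \mapsto e^{X(s)}$, while yours has the virtue of reusing the Duhamel machinery the appendix already deploys (in the circle-product stability lemma), and it yields a marginally sharper constant for free, since $\int_0^1 e^{\|X(s)\|}\,ds \leq \frac{e^{\|A\|} - e^{\|B\|}}{\|A\| - \|B\|} \leq e^{M}$ when $\|A\| \neq \|B\|$. Your closing remark about Hermiticity is apt but not a point of divergence: the paper's power-series argument is equally free of any Hermiticity assumption, so both proofs apply to arbitrary bounded operators.
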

\begin{proof}
    This follows from the fact that
    \begin{align}
        ||\exp(A) - \exp(B)|| & \leq \displaystyle\sum_{j = 1}^{\infty} \frac{1}{j!} ||A^j - B^j|| = \displaystyle\sum_{j = 1}^{\infty} \frac{1}{j!} \left|\left| \displaystyle\sum_{k = 1}^{j} A^{j - k} (A - B) B^{k - 1} \right|\right| \\ &\leq \displaystyle\sum_{j = 1}^{\infty} \frac{1}{j!} j M^{j - 1} ||A - B|| = e^{M} ||A - B||,
    \end{align}
    and we are done.
\end{proof}

\begin{lemma}[Partial trace decreases trace norm]
\label{lem:trace_norm}
Let $A$ be a Hermitian operator supported on $\mathcal{H} = \bigotimes_{v \in \mathcal{V}} \mathcal{H}^v$. Pick $\mathcal{V}' \subset \mathcal{V}$, then $||\text{Tr}_{\mathcal{V'}}\left[A\right] ||_1 \leq ||A||_1$.
\end{lemma}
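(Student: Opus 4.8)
The plan is to use the variational (dual) characterization of the trace norm together with the defining adjoint property of the partial trace. Recall that for any operator $M$ one has $||M||_1 = \sup\{ |\text{Tr}[M X]| : ||X|| \leq 1 \}$, where the supremum ranges over operators $X$ on the same space, $||\cdot||$ is the operator norm, the supremum is attained at $X = U^{\dagger}$ for $M = U|M|$ the polar decomposition, and the general bound $|\text{Tr}[MX]| \leq ||M||_1\, ||X||$ is the standard Hölder inequality (the same special case invoked repeatedly in the proof of Thm.~\ref{thm:error_single_step}).

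First I would apply this characterization to $M = \text{Tr}_{\mathcal{V}'}[A]$, which is supported on $\bigotimes_{v \in \mathcal{V} - \mathcal{V}'} \mathcal{H}^v$. For any operator $Y$ on this reduced space with $||Y|| \leq 1$, the defining property of the partial trace gives $\text{Tr}[\text{Tr}_{\mathcal{V}'}[A]\, Y] = \text{Tr}[A\,(Y \otimes I_{\mathcal{V}'})]$, where $I_{\mathcal{V}'}$ is the identity on the traced-out factor. Since $||Y \otimes I_{\mathcal{V}'}|| = ||Y|| \leq 1$, the Hölder bound yields $|\text{Tr}[A\,(Y \otimes I_{\mathcal{V}'})]| \leq ||A||_1\, ||Y \otimes I_{\mathcal{V}'}|| \leq ||A||_1$. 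Taking the supremum over all contractions $Y$ and using the variational formula for $\text{Tr}_{\mathcal{V}'}[A]$ then gives $||\text{Tr}_{\mathcal{V}'}[A]||_1 \leq ||A||_1$, as desired.

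An alternative route, which exploits the hermiticity of $A$ more directly, proceeds via the Jordan decomposition. One writes $A = A_+ - A_-$ with $A_\pm \geq 0$ of orthogonal support, so that $||A||_1 = \text{Tr}[A_+] + \text{Tr}[A_-]$. Because the partial trace is positive and trace-preserving, $\text{Tr}_{\mathcal{V}'}[A_\pm] \geq 0$ with $\text{Tr}[\text{Tr}_{\mathcal{V}'}[A_\pm]] = \text{Tr}[A_\pm]$, and for positive operators the trace norm equals the trace; hence $||\text{Tr}_{\mathcal{V}'}[A_\pm]||_1 = \text{Tr}[A_\pm]$. A triangle inequality then gives $||\text{Tr}_{\mathcal{V}'}[A]||_1 \leq ||\text{Tr}_{\mathcal{V}'}[A_+]||_1 + ||\text{Tr}_{\mathcal{V}'}[A_-]||_1 = \text{Tr}[A_+] + \text{Tr}[A_-] = ||A||_1$.

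I do not expect a substantive obstacle: this is the standard statement that a positive, trace-preserving map contracts the trace norm. The only points requiring care are ones that are routine but worth stating cleanly — namely the dual characterization of $||\cdot||_1$ (equivalently, the tightness of the Hölder bound $|\text{Tr}[MX]| \leq ||M||_1\,||X||$), and the adjoint identity $\text{Tr}[\text{Tr}_{\mathcal{V}'}[A]\, Y] = \text{Tr}[A\,(Y \otimes I_{\mathcal{V}'})]$ that defines the partial trace, or, for the second route, its positivity and trace-preservation. I would lead with the duality argument, since it is the most self-contained and generalizes verbatim to non-Hermitian $A$.
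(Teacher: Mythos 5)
Your lead argument is exactly the paper's proof: both use the variational characterization $||M||_1 = \sup_{||X|| \leq 1} |\mathrm{Tr}(MX)|$ together with the adjoint identity $\mathrm{Tr}\bigl(Y\,\mathrm{Tr}_{\mathcal{V}'}[A]\bigr) = \mathrm{Tr}\bigl((Y \otimes I_{\mathcal{V}'})A\bigr)$ and the fact that $||Y \otimes I_{\mathcal{V}'}|| = ||Y||$, so the proposal is correct and essentially identical to the paper's. Your alternative Jordan-decomposition route is also sound, but the primary argument is the one the paper gives.
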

\begin{proof}
Recall the variational definition of the trace norm:
\begin{equation}
    ||A||_1 = \sup_{||B|| \leq 1} | \text{Tr}(BA) |
\end{equation}
It follows immediately that
\begin{align}
    ||\text{Tr}_{\mathcal{V}'}\left[A\right]||_1 = \sup_{||B|| \leq 1} |\text{Tr}(B \text{Tr}_{\mathcal{V}'}\left[A\right])| &= \sup_{||B|| \leq 1} | \text{Tr}\left( (B_{\mathcal{V}'} \otimes \mathbb{I}_{\mathcal{V} - \mathcal{V}'}) A \right)|
    \\ & \leq \sup_{||C|| \leq 1} | \text{Tr}(C A) | = ||A||_1
\end{align}
where the supremum over $B$ is over matrices supported on $\mathcal{V}' \subset \mathcal{V}$, and the supremum over $C$ is over matrices supported on the entirety of $\mathcal{V}$.
\end{proof}

\end{document}